\newcommand{\refeq}[1]{(\ref{eq:#1})}
\newcommand{\refeqs}[2]{(\ref{eq:#1}) and (\ref{eq:#2})}
\newcommand{\refsec}[1]{Section \ref{sec:#1}}
\newcommand{\refsubsec}[1]{\ref{subsec:#1}}
\newcommand{\reffig}[1]{Fig. \ref{fig:#1}}
\newcommand{\reffigs}[2]{Figs. \ref{fig:#1} and \ref{fig:#2}}
\newcommand{\reffigss}[2]{Figs. \ref{fig:#1}--\ref{fig:#2}}
\newcommand{\reftab}[1]{Table \ref{tab:#1}}
\newcommand{\reftabs}[2]{Tables \ref{tab:#1} and \ref{tab:#2}}
\newtheorem{theorem}{Theorem}
\newtheorem{proposition}{Proposition}
\def\Vec#1{\boldsymbol{\mathbf{#1}}}
\def\thline{\noalign{\hrule height 1.2pt}}
\begin{document}
\title{
Nonparallel Voice Conversion with 
Augmented Classifier 
Star Generative Adversarial Networks
}

\author{Hirokazu~Kameoka,~
Takuhiro Kaneko,
Kou Tanaka,
and Nobukatsu Hojo%
\thanks{H. Kameoka, T. Kaneko, K. Tanaka, 
and N. Hojo are with NTT Communication Science Laboratories, Nippon Telegraph and Telephone Corporation, Atsugi, Kanagawa, 243-0198 Japan (e-mail: hirokazu.kameoka.uh@hco.ntt.co.jp).}
\thanks{
This work was supported by 
JSPS KAKENHI 17H01763 and JST CREST Grant Number JPMJCR19A3, Japan. 
}}

\markboth{}%
{}

\maketitle

\begin{abstract}
We previously proposed a method that allows for nonparallel voice conversion (VC) by using a variant of generative adversarial networks (GANs) called StarGAN. The main features of our method, called StarGAN-VC, are as follows: First, it requires no parallel utterances, transcriptions, or time alignment procedures for speech generator training. Second, it can simultaneously learn mappings across multiple domains using a single generator network and thus fully exploit available training data collected from multiple domains to capture latent features that are common to all the domains. Third, it can generate converted speech signals quickly enough to allow real-time implementations and requires only several minutes of training examples to generate reasonably realistic-sounding speech. In this paper, we describe three formulations of StarGAN, including a newly introduced novel StarGAN variant called ``Augmented classifier StarGAN (A-StarGAN)'', and compare them in a nonparallel VC task. We also compare them with several baseline methods. 
\end{abstract}

\begin{IEEEkeywords}
Voice conversion (VC), nonparallel VC, multi-domain VC, 
generative adversarial networks (GANs), 
CycleGAN, StarGAN, A-StarGAN.
\end{IEEEkeywords}

\IEEEpeerreviewmaketitle

\section{Introduction}
\label{sec:intro}

Voice conversion (VC) is a task of converting the voice of
a source speaker without changing the uttered sentence. 
Examples of the applications of VC techniques include 
speaker-identity modification \cite{Kain1998short}, speaking assistance \cite{Kain2007short,Nakamura2012short}, 
speech enhancement \cite{Inanoglu2009short,Turk2010short,Toda2012short}, 
bandwidth extension \cite{Jax2003short},
and accent conversion \cite{Felps2009short}.

One successful VC framework involves 
approaches that utilize acoustic models represented by Gaussian mixture models (GMMs) for feature mapping \cite{Stylianou1998short,Toda2007short,Helander2010short}.
Recently, frameworks based on neural networks (NNs)  
\cite{Desai2010short,Mohammadi2014short,Sun2015short,YSaito2017bshort,Kaneko2017cshort,Chen2014short,Nakashika2014ashort,Nakashika2014bshort,Nakashika2015short,Blaauw2016short,Hsu2016short,Hsu2017short,Xie2016short,Kinnunen2017short,Kaneko2017dshort,vandenOord2017bshort,Hashimoto2017short,YSaito2018bshort,Kameoka2019IEEETransshort_ACVAE-VC} and an exemplar-based framework based on nonnegative matrix factorization (NMF) \cite{Takashima2013short,Wu2014short,Sisman2019short_nmf} 
have also proved successful. 
Many conventional VC methods, including those mentioned above, require accurately aligned parallel 
source and target speech data. However, in many scenarios, 
it is not always possible to collect parallel utterances. 
Even if we could collect such data, we typically need to perform 
time alignment procedures, which becomes relatively
difficult when there is a large acoustic gap between
the source and target speech. Since many frameworks are weak as regards the misalignment
found with parallel data, careful pre-screening and
manual correction may be required to make these frameworks work reliably.
To bypass these restrictions, this paper is concerned with developing a nonparallel  
VC method, which requires no parallel utterances, transcriptions, or time alignment procedures.

In general, the quality and conversion effect obtained with nonparallel methods
are usually limited compared with methods using parallel data
due to the disadvantage related to the training condition. 
Thus, developing nonparallel methods whose speech quality and a conversion effect are as high as those of parallel methods can be very challenging. 
Recently, some attempts have been made to develop nonparallel methods \cite{Chen2014short,Nakashika2014ashort,Nakashika2014bshort,Nakashika2015short,Blaauw2016short,Hsu2016short,Hsu2017short,Xie2016short,Kinnunen2017short,Kaneko2017dshort,vandenOord2017bshort,Hashimoto2017short,YSaito2018bshort,Kameoka2019IEEETransshort_ACVAE-VC}.
One example is a method using automatic speech recognition (ASR) \cite{Xie2016short}.
The idea is to convert input speech under the restriction that the posterior state probability of the acoustic model of an ASR system is preserved so that the transcription of the converted speech becomes consistent with that of the input speech. 
Since the performance of this method depends heavily on the quality of the acoustic model of ASR,
it can fail to work if ASR does not function reliably. 
A method using i-vectors \cite{Dehak2011short}, known as a feature for speaker verification, was proposed in \cite{Kinnunen2017short}.
Conceptually, the idea is to
shift the acoustic features of input speech towards target speech in the i-vector space so that
the converted speech is likely to be recognized as the target speaker by a speaker recognizer.
While this method is also free from parallel data, one limitation is that 
it is applicable only to speaker identity conversion tasks.

Recently, a framework based on conditional variational autoencoders (CVAEs) \cite{Kingma2014ashort,Kingma2014bshort} was proposed in \cite{Hsu2016short,YSaito2018bshort,Kameoka2019IEEETransshort_ACVAE-VC}.
As the name implies, variational autoencoders (VAEs), 
consisting of encoder and decoder networks,
are probabilistic counterparts of autoencoders (AEs).
CVAEs \cite{Kingma2014bshort} are an extended version of VAEs where the encoder and decoder networks can take a class indicator variable 
as an additional input. 
By using acoustic features as the training examples and 
the associated domain class labels, 
the networks learn how to convert source speech to a target domain 
according to the domain class label fed into the decoder. 
This CVAE-based VC approach is notable in that it is completely free from parallel data and works even with unaligned corpora. 
However, one well-known problem as regards VAEs is that outputs from the decoder tend to be oversmoothed. 
For VC applications, this can be problematic since it usually results in poor quality buzzy-sounding speech. 

One powerful framework that can potentially overcome the weakness of VAEs involves generative adversarial networks (GANs) \cite{Goodfellow2014short}.
GANs offer a general framework 
for training a generator network 
so that it can generate fake data samples 
that can deceive a real/fake discriminator network 
in the form of a minimax game. 
While they have been found to be effective for use with image generation, 
in recent years they 
have also been employed with notable success for various speech processing tasks \cite{Kaneko2017ashort,YSaito2018ashort,Pascual2017short,Kaneko2017bshort,Kaneko2017cshort,Oyamada2018short}.
We previously reported a nonparallel VC method using a GAN variant called cycle-consistent GAN (CycleGAN) \cite{Kaneko2017dshort}, which
was originally proposed as a method for translating images using unpaired training examples \cite{Zhu2017short,Kim2017short,Yi2017short}. 
Although this method, which we call CycleGAN-VC, was shown to work reasonably well, 
one major limitation is that it only learns mappings between a single pair of domains. 
In many VC application scenarios, it is desirable to 
be able to convert speech into multiple domains, not just one. 
One naive way of applying CycleGAN to multi-domain VC tasks would be to prepare and train a 
different mapping 
pair for each domain pair. 
However, this can be ineffective 
since each mapping pair
fails to use the training data of the other domains for learning,
even though there must be a common set of latent features that can be shared 
across different domains.

To overcome the shortcomings and limitations of CVAE-VC \cite{Hsu2016short} and CycleGAN-VC \cite{Kaneko2017dshort}, we previously proposed a nonparallel VC method 
\cite{Kameoka2018SLTshort_StarGAN-VC} using
another GAN variant called StarGAN \cite{Choi2017short}, which 
offers the advantages of CVAE-VC and CycleGAN-VC concurrently. 
Unlike CycleGAN-VC and as with CVAE-VC, our method, which we call StarGAN-VC, is 
capable of simultaneously learning multiple mappings using a single generator network 
and can thus fully
use available training data collected from multiple domains.
Unlike CVAE-VC and as with CycleGAN-VC, StarGAN-VC uses an adversarial loss 
for generator training 
to encourage 
the generator outputs to become indistinguishable from real speech.
It is also noteworthy that unlike CVAE-VC and CycleGAN-VC, 
StarGAN-VC does not require any information about the domain of the input speech at test time.

In this paper, we describe three formulations of StarGAN, including a newly introduced novel StarGAN variant called ``Augmented classifier StarGAN'', and compare them in a nonparallel VC task.
The remainder of this paper is organized as follows.
After reviewing other related work in \refsec{relatedwork},
we briefly describe the formulation of CycleGAN-VC in \refsec{cyclegan-vc},
present the three formulations of StarGAN-VC in \refsec{stargan-vc},
and show experimental results in \refsec{experiments}.

\section{Related Work}
\label{sec:relatedwork}

Other natural ways of overcoming the weakness of VAEs includes
the VAE-GAN framework \cite{Larsen2015short}.
A nonparallel VC method based on this framework has already been proposed in \cite{Hsu2017short}.
With this approach, an adversarial loss derived using a GAN discriminator is incorporated into the training loss to encourage the decoder outputs of a CVAE to be indistinguishable from real speech features. 
Although the concept is similar to our StarGAN-VC approach, 
we will show in \refsec{experiments} that our approach outperforms this method 
in terms of both speech quality and the conversion effect.

Another related technique worth noting is 
the vector quantized VAE (VQ-VAE) approach \cite{vandenOord2017bshort}, 
which has performed impressively in nonparallel VC tasks. 
This approach 
is particularly notable in that it offers a novel way of overcoming the weakness of VAEs by using
the WaveNet model \cite{vandenOord2016short}, a sample-by-sample neural signal generator, to devise both the encoder and decoder of a discrete counterpart of CVAEs.
The original WaveNet model is a recursive model that makes it possible to predict the distribution of a sample conditioned on the samples the generator has produced. 
While a faster version \cite{vandenOord2017ashort} has recently been proposed,
it typically requires huge computational cost to generate a stream of samples, which can cause difficulties when implementing real-time systems.
The model is also known to require a huge number of training examples to generate natural-sounding speech.
By contrast, our method is noteworthy in that it is able to generate signals quickly enough to allow real-time implementation and requires only several minutes of training examples to generate reasonably realistic-sounding speech.

Meanwhile, 
given the recent success of the sequence-to-sequence (S2S) learning framework
in various tasks, 
several VC methods based on S2S models have been
proposed, including the ones we proposed previously \cite{Tanaka2019short,Kameoka2020_ConvS2S-VC,Huang2019arXiv_VTN,Kameoka2020_VTN}. 
While S2S models usually require parallel corpora for training, 
an attempt has also been made to train an S2S model using nonparallel utterances 
\cite{Zhang2019}.
However, 
it requires phoneme transcriptions as auxiliary information
for model training.

\section{CycleGAN Voice Conversion}
\label{sec:cyclegan-vc}

Since StarGAN-VC is an extension of CycleGAN-VC, which we proposed previously \cite{Kaneko2017dshort},  
we start by briefly reviewing its formulation (\reffig{cyclegan}).

Let $\Vec{x}\in \mathbb{R}^{Q\times N}$ and $\Vec{y}\in\mathbb{R}^{Q\times M}$
be acoustic feature sequences of speech belonging to domains $X$ and $Y$, respectively,
where $Q$ is the feature dimension and $N$ and $M$ are the lengths of the sequences.
In the following, we will restrict our attention to speaker identity conversion tasks,
so when we use the term domain, we will mean speaker.
The aim of CycleGAN-VC is to learn 
a mapping $G$ that converts the domain of $\Vec{x}$ into $Y$
and a mapping $F$ that does the opposite. 
Now, we introduce discriminators $D_X$ and $D_Y$,
whose roles are to predict whether or not their inputs are the acoustic features of real speech belonging to $X$ and $Y$,
and define 
\begin{align}
\mathcal{L}_{\rm adv}^{D_Y}(D_Y) 
=&- 
\mathbb{E}_{\Vec{y}\sim p_Y(\Vec{y})}[\log D_Y(\Vec{y})]\nonumber\\
&-\mathbb{E}_{\Vec{x}\sim p_X(\Vec{x})}[\log (1- D_Y(G(\Vec{x})))],
\label{eq:cyclegan-advloss_dy}
\\
\mathcal{L}_{\rm adv}^{G}(G) 
=&\mathbb{E}_{\Vec{x}\sim p_X(\Vec{x})}[\log (1- D_Y(G(\Vec{x})))],
\label{eq:cyclegan-advloss_g}
\\
\mathcal{L}_{\rm adv}^{D_X}(D_X) 
=&-
\mathbb{E}_{\Vec{x}\sim p_X(\Vec{x})}[\log D_X(\Vec{x})]\nonumber\\
&-
\mathbb{E}_{\Vec{y}\sim p_Y(\Vec{y})}[\log (1- D_X(F(\Vec{y})))],
\label{eq:cyclegan-advloss_dx}
\\
\mathcal{L}_{\rm adv}^{F}(F) 
=&
\mathbb{E}_{\Vec{y}\sim p_Y(\Vec{y})}[\log (1- D_X(F(\Vec{y})))],
\label{eq:cyclegan-advloss_f}
\end{align}
as the adversarial losses for $D_Y$, $G$, $D_X$ and $F$, 
respectively.
$\mathcal{L}_{\rm adv}^{D_Y}(D_Y)$ 
and 
$\mathcal{L}_{\rm adv}^{D_X}(D_X) $
measure how indistinguishable 
$G(\Vec{x})$ and $F(\Vec{y})$ are 
from acoustic features of real speech belonging to $Y$ and $X$. 
Since the goal of $D_X$ and $D_Y$ is to correctly distinguish 
the converted feature sequences obtained via $G$ and $F$
from real speech feature sequences, 
$D_X$ and $D_Y$ attempt to minimize these losses to avoid being fooled by $G$ and $F$.
Conversely, since one of the goals of $G$ and $F$ is to generate natural-sounding speech that is indistinguishable from real speech, $G$ and $F$ attempt to maximize these losses or minimize 
$\mathcal{L}_{\rm adv}^{G}(G)$ and $\mathcal{L}_{\rm adv}^{F}(F)$ to fool $D_Y$ and $D_X$.
It can be shown that the output distributions of $G$ and $F$ trained in this way will match the empirical distributions 
$p_Y(\Vec{y})$ and $p_X(\Vec{x})$ 
if $G$, $F$, $D_X$, and $D_Y$ have enough capacity \cite{Zhu2017short,Goodfellow2014short}. 
Note that since 
$\mathcal{L}_{\rm adv}^{G}(G)$ and $\mathcal{L}_{\rm adv}^{F}(F)$ 
are minimized when $D_Y(G(\Vec{x}))= 1$ and $D_X(F(\Vec{y}))= 1$,
we can also use 
$- \mathbb{E}_{\Vec{x}\sim p_X(\Vec{x})}[\log D_Y(G(\Vec{x}))]$
and 
$- \mathbb{E}_{\Vec{x}\sim p_X(\Vec{x})}[\log D_Y(G(\Vec{x}))]$ 
as the adversarial losses for $G$ and $F$.

As mentioned above,
training $G$ and $F$ using the adversarial losses
enables mappings $G$ and $F$ to produce outputs identically distributed 
as target domains $Y$ and $X$, respectively.
However,  
using them alone does not guarantee that 
$G$ or $F$ will preserve the linguistic contents of 
input speech since there are infinitely many mappings 
that will induce the same output distributions.
One way to let $G$ and $F$ preserve the linguistic contents of input speech 
would be to encourage them to make only minimal changes from the inputs.
To incentivize this behaviour,
we introduce a cycle consistency loss \cite{Zhu2017short,Kim2017short,Yi2017short}
\begin{align}
\mathcal{L}_{\rm cyc}(G,F) 
&= \mathbb{E}_{\Vec{x}\sim p_X(\Vec{x})}[\| F(G(\Vec{x})) - \Vec{x} \|_\rho^\rho]
\nonumber\\
&+\mathbb{E}_{\Vec{y}\sim p_Y(\Vec{y})}[\| G(F(\Vec{y})) - \Vec{y} \|_\rho^\rho],
\end{align}
to enforce $F(G(\Vec{x}))\simeq \Vec{x}$ and $G(F(\Vec{y}))\simeq \Vec{y}$. 
In image-to-image translation tasks, 
this regularization loss contributes to enabling $G$ and $F$ to change only the textures and colors of input images while preserving the domain-independent contents.
However, the effect this loss would have on VC tasks was nontrivial.
Our previous work \cite{Kaneko2017dshort} was among the first  
to show that it enables $G$ and $F$ to change only the voice characteristics
of input speech while preserving the linguistic content.
This regularization technique has recently proved 
effective also in the VAE-based VC methods \cite{Tobing2019short}. 
With the same motivation, we also consider an identity mapping loss
\begin{align}
\mathcal{L}_{\rm id}(G,F) 
&= \mathbb{E}_{\Vec{x}\sim p_X(\Vec{x})}[\| F(\Vec{x}) - \Vec{x} \|_\rho^\rho]
\nonumber\\
&+\mathbb{E}_{\Vec{y}\sim p_Y(\Vec{y})}[\| G(\Vec{y}) - \Vec{y} \|_\rho^\rho],
\end{align}
to ensure that inputs to $G$ and $F$ 
are kept unchanged when the inputs already belong to $Y$ and $X$.
The full objectives of CycleGAN-VC to be minimized with respect to 
$G$, $F$, $D_X$, and $D_Y$ 
are thus given as 
\begin{align}
\mathcal{I}_{G,F}(G,F) =&
\lambda_{\rm adv}
\mathcal{L}_{\rm adv}^{G}(G) +
\lambda_{\rm adv}
\mathcal{L}_{\rm adv}^{F}(F) 
\nonumber\\
&+ 
\lambda_{\rm cyc}
\mathcal{L}_{\rm cyc}(G,F) + 
\lambda_{\rm id}
\mathcal{L}_{\rm id}(G,F),\\
\mathcal{I}_{D}(D_X,D_Y) =&
\mathcal{L}_{\rm adv}^{D_X}(D_X)
+
\mathcal{L}_{\rm adv}^{D_Y}(D_Y),
\end{align}
where 
$\lambda_{\rm adv}\ge 0$,
$\lambda_{\rm cyc}\ge 0$, and $\lambda_{\rm id}\ge 0$ are 
regularization parameters, which weigh the importance of 
the adversarial,
cycle consistency, and identity mapping losses.
In practice,
we alternately update  
$G$, $F$, $D_X$, and $D_Y$ one at a time
while keeping the others fixed.

\begin{figure*}[t!]
  \begin{minipage}[t]{.53\linewidth}
  \centering
  \centerline{\includegraphics[height=5.9cm]{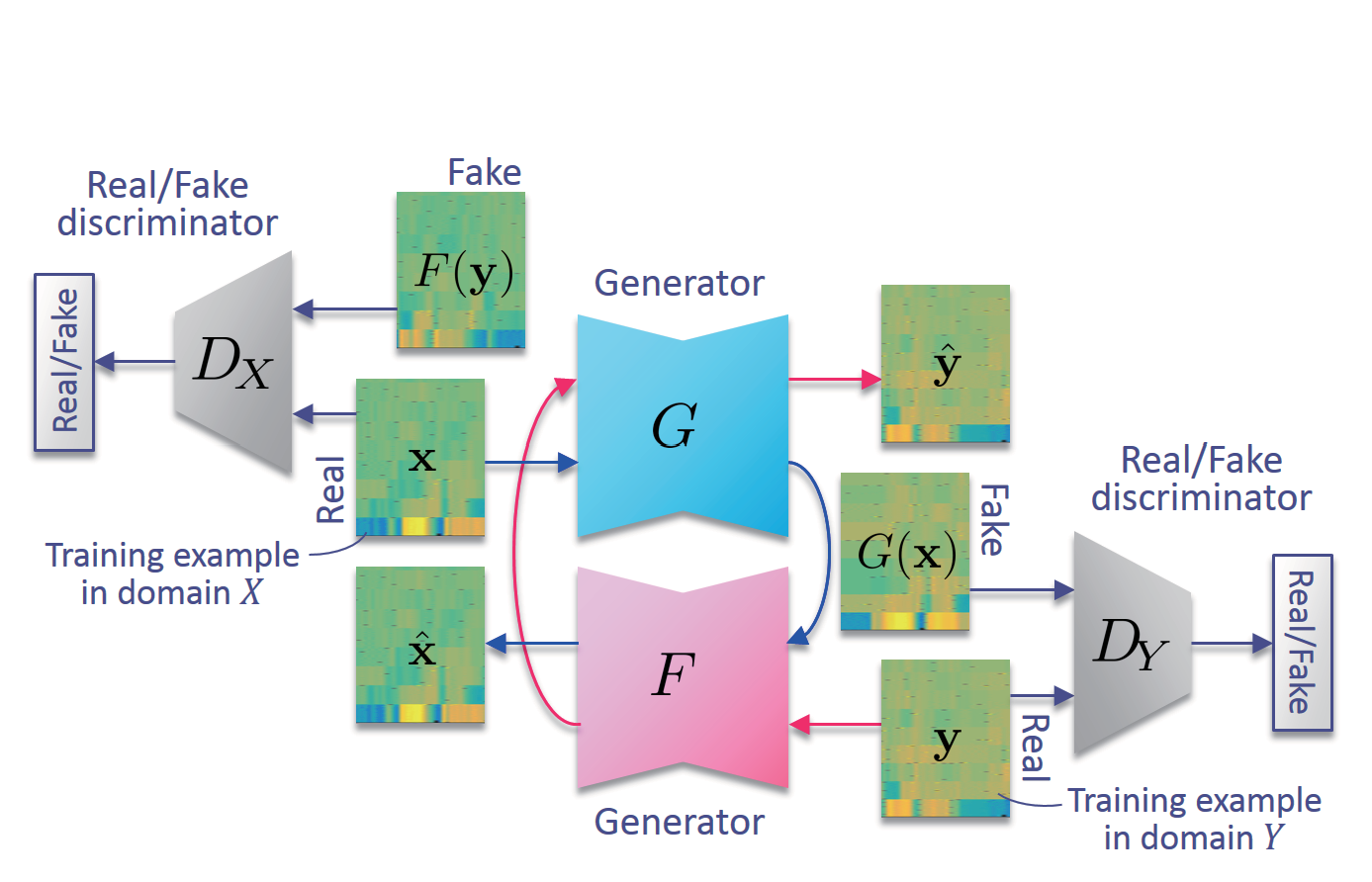}}
  \vspace{-1ex}
  \caption{Illustration of CycleGAN training.}
  \label{fig:cyclegan}
  \end{minipage}
  \hspace{2ex}
  \begin{minipage}[t]{.47\linewidth}
  \centering
  \centerline{\includegraphics[height=5.9cm]{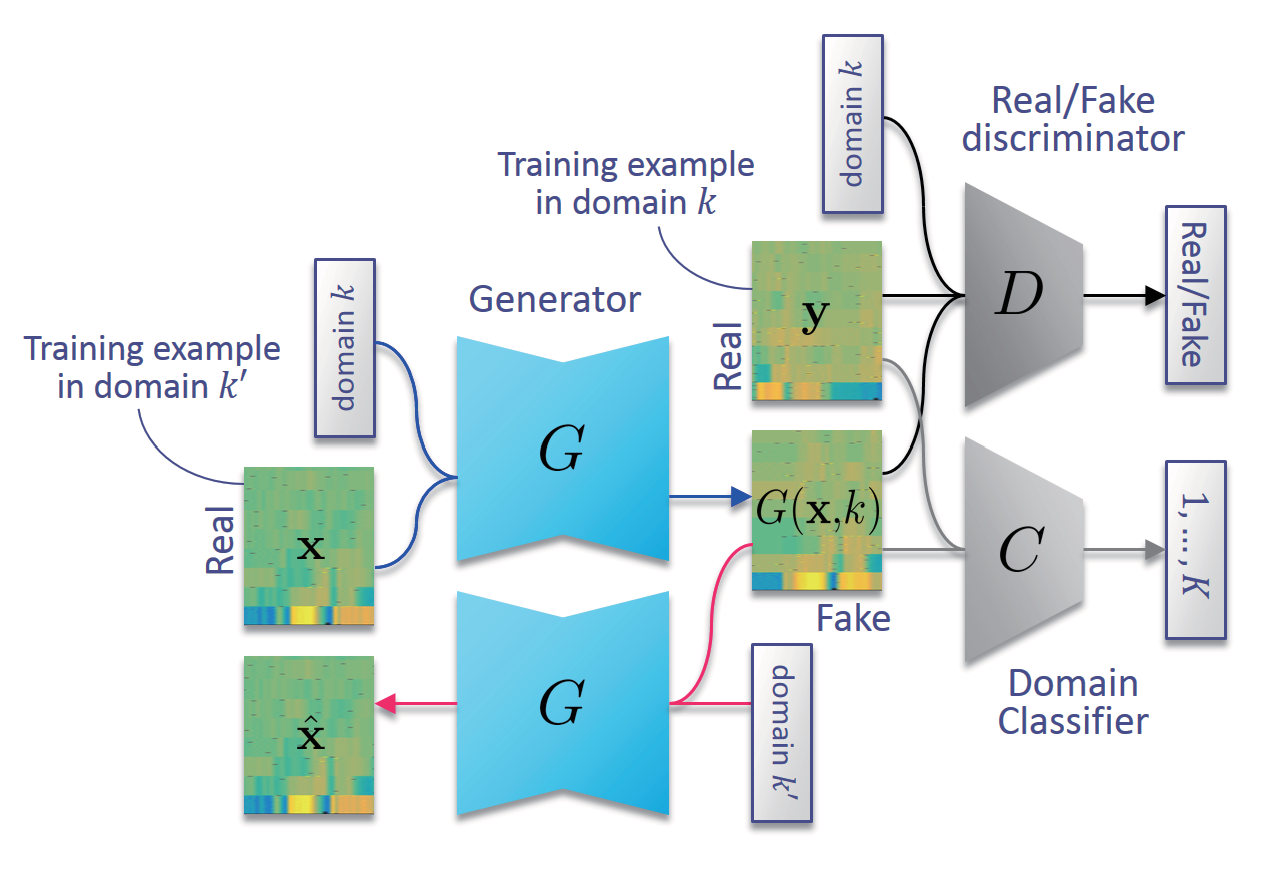}}
  \vspace{-1ex}
  \caption{Illustration of C-StarGAN training. 
  The $D$ network is designed to take the domain index $k$ as an additional input 
  and produce the probability of $\Vec{x}$ being a real data sample in domain $k$.}
  \label{fig:cstargan}
  \end{minipage}
\end{figure*}

\section{StarGAN Voice Conversion}
\label{sec:stargan-vc}

While CycleGAN-VC 
can only learn mappings between a single pair of domains, 
StarGAN-VC \cite{Kameoka2018SLTshort_StarGAN-VC} 
can learn mappings among multiple speech domains using a single generator network,
thus allowing us to fully utilize available training data collected from
multiple domains.
In this section, we describe three formulations of StarGAN. 
While the first and second formulations respectively correspond to the ones 
presented in \cite{Kameoka2018SLTshort_StarGAN-VC} and \cite{Choi2017short},
the third formulation is newly proposed in this paper 
with the aim of further improving the former two.

\subsection{Cross-Entropy StarGAN formulation}

First, we describe the formulation we introduced in \cite{Kameoka2018SLTshort_StarGAN-VC}.
Let $G$ be a generator that takes an acoustic feature sequence $\Vec{x}\in \mathbb{R}^{Q\times N}$ 
belonging to an arbitrary domain and a target domain class index $k\in\{1,\ldots,K\}$ as the inputs and generates an acoustic feature sequence $\hat{\Vec{y}} = G(\Vec{x},k)$. 
For example, if we consider speaker identities as the domain classes, 
each $k$ will be associated with a different speaker.
One of the goals of StarGAN-VC is to make 
$\hat{\Vec{y}} = G(\Vec{x},k)$ as realistic as real speech features
and belong to domain $k$. 
To achieve this, we introduce a real/fake discriminator $D$ as with CycleGAN 
and a domain classifier $C$, 
whose role is to predict to which classes an input belongs. 
$D$ is designed to produce a probability $D(\Vec{y},k)$ that an input $\Vec{y}$ is a real speech feature whereas 
$C$ is designed to produce class probabilities $p_C(k|\Vec{y})$ of $\Vec{y}$.

\noindent
{\bf Adversarial Loss:}
First, we define
\begin{align}
\mathcal{L}_{\rm adv}^D(D) =& 
- \mathbb{E}_{k \sim p(k), \Vec{y}\sim p_{\rm d}(\Vec{y}|k)}[\log D(\Vec{y},k)] 
\nonumber\\
&- \mathbb{E}_{k\sim p(k), \Vec{x}\sim p_{\rm d}(\Vec{x})}
[\log (1-D(G(\Vec{x},k),k))],
\label{eq:advloss_d}
\\
\mathcal{L}_{\rm adv}^G(G) =&
- \mathbb{E}_{k\sim p(k), \Vec{x}\sim p_{\rm d}(\Vec{x})}
[\log D(G(\Vec{x},k),k)],
\label{eq:advloss_g}
\end{align}
as adversarial losses for discriminator $D$ and generator $G$, respectively, 
where 
$p(k)$ is a uniform categorical distribution ($p(k)=\frac{1}{K}$),
$\Vec{y}\sim p_{\rm d}(\Vec{y}|k)$ denotes a training example of an acoustic feature sequence of real speech in domain $k$, and 
$\Vec{x}\sim p_{\rm d}(\Vec{x})$ denotes that in an arbitrary domain. 
$\mathcal{L}_{\rm adv}^D(D)$ takes a small value when $D$ correctly classifies 
$G(\Vec{x},k)$ and $\Vec{y}$ as fake and real speech features 
whereas
$\mathcal{L}_{\rm adv}^G(G)$ takes a small value when $G$ successfully deceives $D$ so that 
$G(\Vec{x},k)$ is misclassified as real speech features by $D$.
Thus, we would like to minimize $\mathcal{L}_{\rm adv}^D(D)$ with respect to $D$ and
minimize $\mathcal{L}_{\rm adv}^G(G)$ with respect to $G$.
Note that 
$\mathbb{E}_{k\sim p(k)}[\cdot]$ is a simplified notation for
$\frac{1}{K}\sum_{k=1}^{K}(\cdot)$, and
when $k$ denotes a speaker index,
$\mathbb{E}_{\Vec{y}\sim p_{\rm d}(\Vec{y}|k)}[\cdot]$
and
$\mathbb{E}_{\Vec{x}\sim p_{\rm d}(\Vec{x})}[\cdot]$
denote 
the sample means over the training examples of speaker $k$
and all speakers, respectively.
Note also that 
the order of the variables over which each expectation is taken
corresponds to the order of the for-loop in our implementation.

\noindent
{\bf Domain Classification Loss:} 
Next, we define
\begin{align}
\mathcal{L}_{\rm cls}^C(C) = &
- \mathbb{E}_{k\sim p(k), \Vec{y}\sim p_{\rm d}(\Vec{y}|k)}
[\log p_C(k|\Vec{y})],
\label{eq:clsloss_c}
\\
\mathcal{L}_{\rm cls}^G(G) = &
- \mathbb{E}_{k\sim p(k), \Vec{x}\sim p(\Vec{x})}
[\log p_C(k|G(\Vec{x},k))],
\label{eq:clsloss_g}
\end{align}
as domain classification losses for classifier $C$ and generator $G$.
$\mathcal{L}_{\rm cls}^C(C)$ and $\mathcal{L}_{\rm cls}^G(G)$ take small values
when $C$ correctly classifies $\Vec{y}\sim p_{\rm d}(\Vec{y}|k)$ and 
$G(\Vec{x},k)$ as belonging to domain $k$. 
Thus, we would like to minimize $\mathcal{L}_{\rm cls}^C(C)$ with respect to $C$ 
and $\mathcal{L}_{\rm cls}^G(G)$ with respect to $G$.

\noindent
{\bf Cycle Consistency Loss:} 
Training $G$, $D$, and $C$ using only the losses presented above does not guarantee that 
$G$ will preserve the linguistic content of input speech.  
As with CycleGAN-VC,
we introduce a cycle consistency loss to be minimized
\begin{multline}
\mathcal{L}_{\rm cyc}(G) \\
= 
\mathbb{E}_{k\sim p(k), k'\sim p(k), \Vec{x}\sim p_{\rm d}(\Vec{x}|k')}
[\| G(G(\Vec{x},k),k') - \Vec{x}\|_\rho^\rho],
\end{multline}
to encourage $G(\Vec{x},k)$ to
preserve the linguistic content of \Vec{x},
where 
$\Vec{x}\sim p_{\rm d}(\Vec{x}|k')$ denotes a training example of real speech feature sequences 
in domain $k'$, and $\rho$ is a positive constant.
We also consider an identity mapping loss
\begin{align}
\mathcal{L}_{\rm id}(G) = 
\mathbb{E}_{k'\sim p(k), \Vec{x}\sim p_{\rm d}(\Vec{x}|k')}
[\|G(\Vec{x},k') - \Vec{x} \|_\rho^\rho],
\end{align}
to ensure that an input into $G$ will remain unchanged when the input already belongs to domain $k'$.

To summarize, 
the full objectives  
to be minimized with respect to $G$, $D$, and $C$
are given as 
\begin{align}
\mathcal{I}_G(G) =
&\lambda_{\rm adv}
\mathcal{L}_{\rm adv}^{G}(G)
+
\lambda_{\rm cls}
\mathcal{L}_{\rm cls}^{G}(G)
\nonumber\\
&+
\lambda_{\rm cyc}
\mathcal{L}_{\rm cyc}(G)
+
\lambda_{\rm id}
\mathcal{L}_{\rm id}(G),
\\
\mathcal{I}_D(D) =
&\lambda_{\rm adv}
\mathcal{L}_{\rm adv}^{D}(D),
\\
\mathcal{I}_C(C) =
&\lambda_{\rm cls}
\mathcal{L}_{\rm cls}^{C}(C),
\end{align}
respectively, 
where $\lambda_{\rm adv}\ge 0$, 
$\lambda_{\rm cls}\ge 0$, $\lambda_{\rm cyc}\ge 0$ and $\lambda_{\rm id}\ge 0$ are 
regularization parameters, which weigh the importance of 
the adversarial,
domain classification,
cycle consistency, and 
identity mapping losses.
Since the adversarial and domain classification losses in 
(\ref{eq:advloss_d}), (\ref{eq:advloss_g}), (\ref{eq:clsloss_c}) and (\ref{eq:clsloss_g})
are defined using cross-entropy measures, 
we refer to this version of StarGAN as ``C-StarGAN'' (\reffig{cstargan}).

\begin{figure*}[t!]
  \begin{minipage}[t]{.47\linewidth}
  \centering
  \centerline{\includegraphics[height=5.9cm]{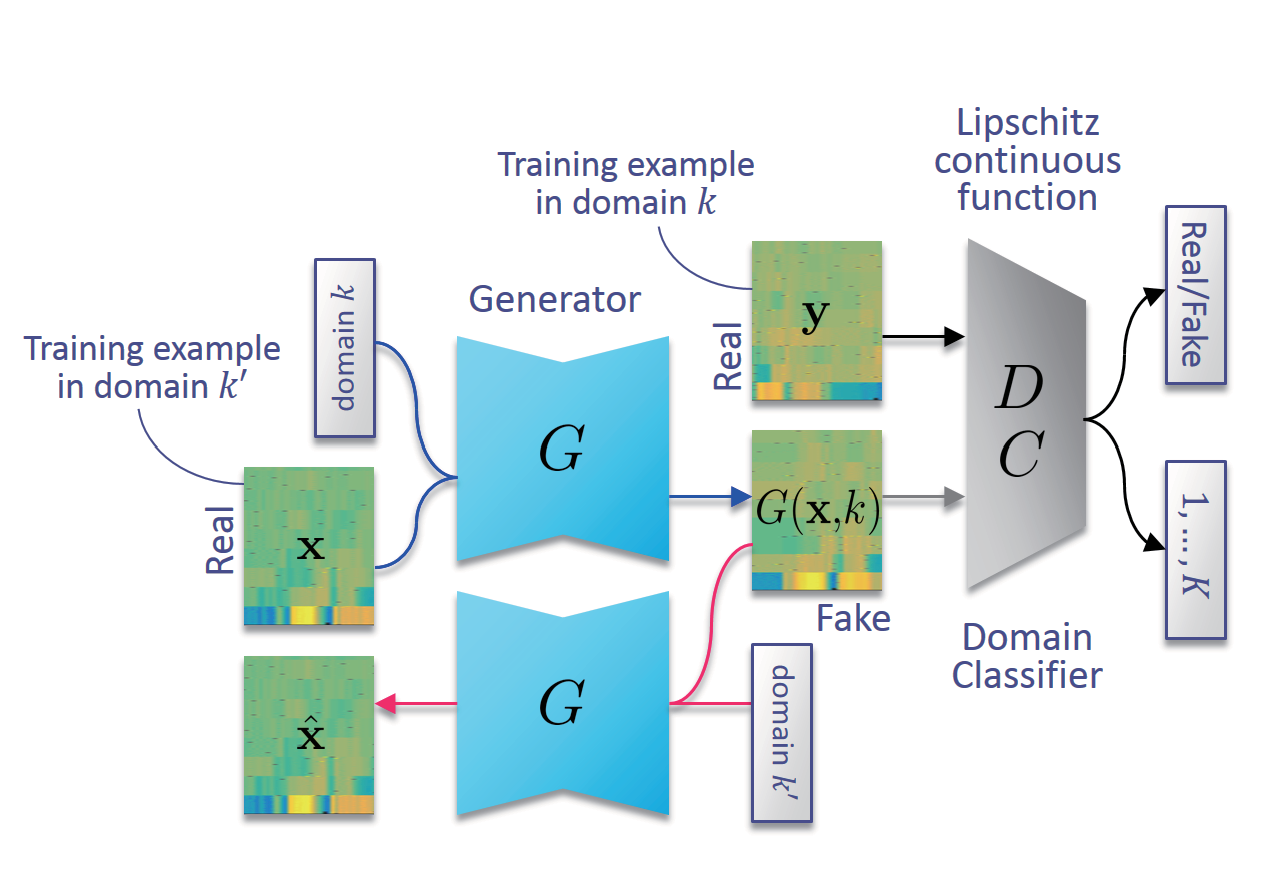}}
  \vspace{-1ex}
  \caption{Illustration of W-StarGAN training. 
  The $D$ and $C$ networks are designed to share lower layers
  and produce the score that measures how likely $\Vec{x}$ is to be a real data sample 
  and the probability of $\Vec{x}$ belonging to each domain.}
  \label{fig:wstargan}
  \end{minipage}
  \hspace{2ex}
  \begin{minipage}[t]{.47\linewidth}
  \centering
  \centerline{\includegraphics[height=5.9cm]{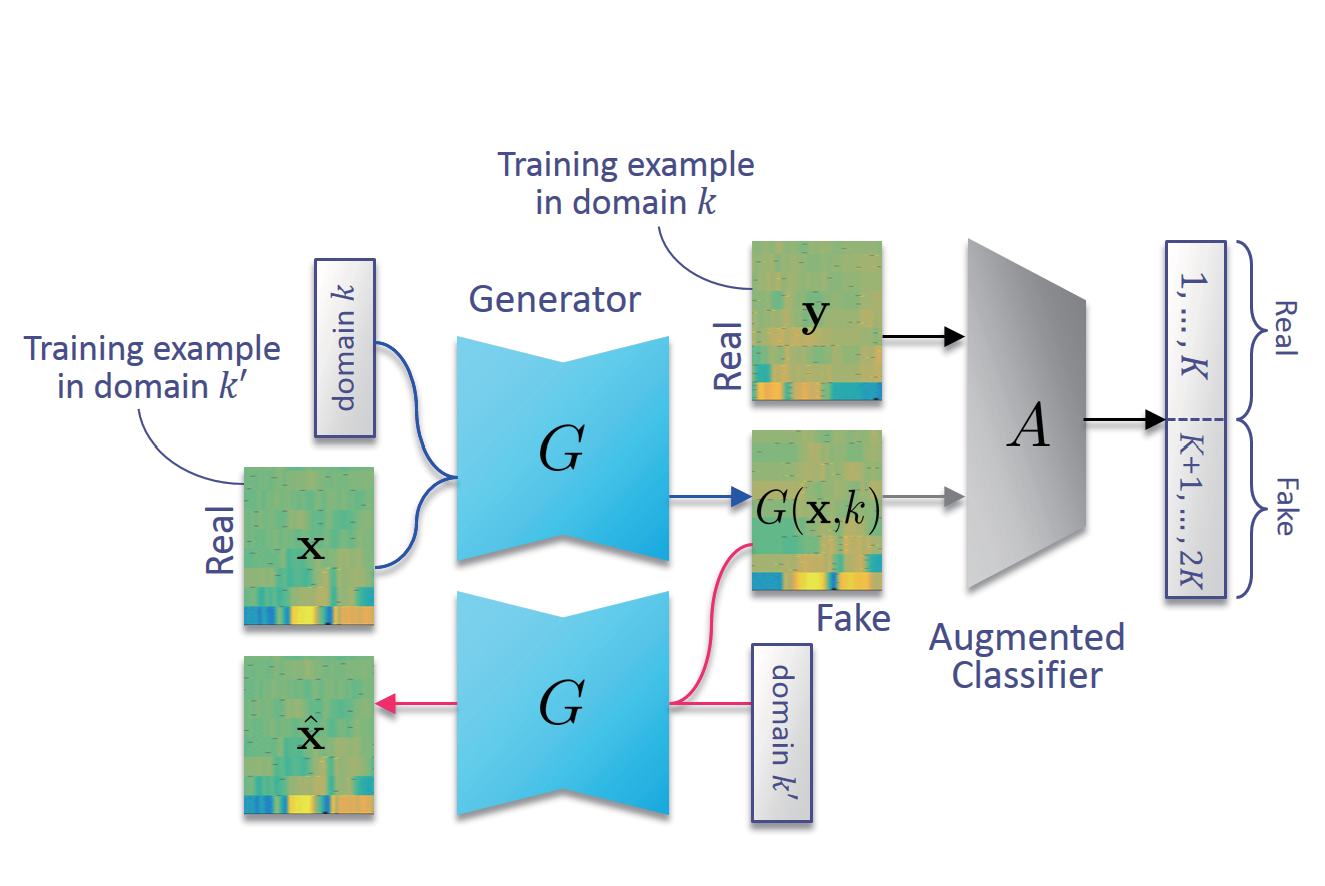}}
  \vspace{-1ex}
  \caption{Illustration of A-StarGAN training. The $A$ network is designed to
  produce $2K$ probabilities, where the first and second $K$ probabilities
  correspond to real and fake classes, and   
  simultaneously play the roles of the real/fake discriminator and domain classifier.}
  \label{fig:astargan}
  \end{minipage}
\end{figure*}

\subsection{Wasserstein StarGAN formulation}

Next, we describe the original StarGAN formulation
\cite{Choi2017short}.
It is frequently reported that optimization in regular GAN training 
can often get unstable.
It has been shown
that
using a cross-entropy measure as the minimax objective
corresponds to optimizing  
the Jensen-Shannon (JS) divergence between the real data distribution
and the generator's distribution \cite{Goodfellow2014short}.
As discussed in \cite{Arjovsky2017short}, 
why regular GAN training tends to easily get unstable 
can be explained by the fact that the JS divergence will be maxed out 
when the two distributions are distant from each other and thus have disjoint supports.
It is probable that this can also happen in the StarGAN training when
using a cross-entropy measure.
With the aim of stabilizing training, instead of the cross-entropy measure,
the original StarGAN adopts 
the Wasserstein distance as the training objective, 
which provides a meaningful distance metric
between two distributions even for those with disjoint supports. 
By using
the Kantorovich-Rubinstein duality theorem \cite{villani2008optimal},
a tractable form of 
the Wasserstein distance
between 
the real speech feature distribution $p(\Vec{x})$ and
the distribution of the fake samples generated by the generator $G(\Vec{x},k)$, 
where $\Vec{x}\sim p_{\rm d}(\Vec{x})$ and $k\sim p(k)$,
is given by
\begin{multline}
\mathcal{W}(G) = 
\max_{D\in\mathcal{D}}
\big\{
\mathbb{E}_{\Vec{y}\sim p_{\rm d}(\Vec{y})}[D(\Vec{y})] 
\\-
\mathbb{E}_{k\sim p(k), \Vec{x}\sim p_{\rm d}(\Vec{x})}[D(G(\Vec{x},k))] 
\big\},
\label{eq:wasserstein}
\end{multline} 
where 
$D$ must lie within the space
$\mathcal{D}$ of 1-Lipschitz functions.
A 1-Lipschtiz function is a differentiable function that has gradients 
with norm at most 1 everywhere. 
This Lipschtiz constraint is derived as a result of 
obtaining the above form of the Wasserstein distance \cite{villani2008optimal}.
As \refeq{wasserstein} shows, 
the computation of the Wasserstein distance requires
optimization with respect to a function $D$.
Thus, if we describe $D$ using a neural network,
the problem of minimizing $\mathcal{W}(G)$ with respect to $G$ 
leads to a minimax game played by $G$ and $D$, as with regular GAN training, 
where $D$ plays a similar role to the discriminator.
Now, recall that
the function $D$ must be 1-Lipschitz. 
Although there are several ways to 
constrain $D$, such as the weight clipping technique 
adopted in \cite{Arjovsky2017short},
one successful and convenient way involves imposing
a penalty on the sampled gradients of $D$
\begin{align}
\mathcal{R}(D) = 
\mathbb{E}_{\hat{\Vec{x}}\sim p(\hat{\Vec{x}})}
[(\|
\nabla
D(\hat{\Vec{x}}) \|_2 -1)^2],
\end{align}
and including it
in the training objective \cite{Gulrajani2017short},
where 
$\nabla$ denotes the gradient operator and
$\hat{\Vec{x}}$
is a sample uniformly drawn along a straight line between
a pair of a real and a generated samples. 
We must also consider incorporating the domain classification loss to 
encourage $G(\Vec{x},k)$ to belong to class $k$ and 
the cycle-consistency loss to 
encourage $G(\Vec{x},k)$ to preserve 
the linguistic information in the input $\Vec{x}$. 
Overall, the training objectives to be minimized with respect to $G$, $D$, and $C$
become
\begin{align}
\mathcal{I}_{G}(G) =& 
-
\lambda_{\rm adv}
\mathbb{E}_{k\sim p(k), \Vec{x}\sim p_{\rm d}(\Vec{x})}[D(G(\Vec{x},k))]
\nonumber\\
&
+
\lambda_{\rm cls}
\mathcal{L}_{\rm cls}^{G}(G)
+
\lambda_{\rm cyc}
\mathcal{L}_{\rm cyc}(G)
+
\lambda_{\rm id}
\mathcal{L}_{\rm id}(G),
\\
\mathcal{I}_{D}(D) =&
\lambda_{\rm adv}
\mathbb{E}_{k\sim p(k), \Vec{x}\sim p_{\rm d}(\Vec{x})}[D(G(\Vec{x},k))]
\nonumber\\
&- \lambda_{\rm adv}
\mathbb{E}_{\Vec{y}\sim p_{\rm d}(\Vec{y})}[D(\Vec{y})]
\nonumber\\
&+ \lambda_{\rm gp} \mathbb{E}_{\hat{\Vec{x}}\sim p(\hat{\Vec{x}})}
[(\|
\nabla
D(\hat{\Vec{x}}) \|_2 -1)^2],
\\
\mathcal{I}_{C}(C) =& 
\lambda_{\rm cls}
\mathcal{L}_{\rm cls}^{C}(C),
\end{align}
where $\lambda_{\rm gp}\ge 0$
is for weighing the importance of the gradient penalty.
We refer to this version of StarGAN as ``Wasserstein StarGAN (W-StarGAN)''.
It should be noted that 
the authors of
\cite{Choi2017short}
choose to implement $D$ and $C$ as a single
multi-task classifier network that simultaneously 
produces the values $D(\Vec{x})$ and $p_{C}(k|\Vec{x})$ $(k=1,\ldots,K)$ (\reffig{wstargan}). 

\subsection{Proposed New StarGAN formulation}
\label{subsec:astargan}
With the two StarGAN formulations presented above, 
the ability of $G$ to appropriately convert its input into a target domain
depends on how the decision boundary is formed by $C$ during training.
The domain classification loss can be easily made almost 0 
by letting 
the samples of $G(\Vec{x},k)$ resemble, for example, only 
a few of the real speech samples in domain $k$
near the decision boundary.
In such situations, 
$G$ will have no incentive to attempt to make the generated samples
get closer to the rest of the real speech samples distributed in domain $k$.
As a result, the conversion effect of the trained $G$ will be limited. 
One reasonable way 
to avoid such situations
would be to 
consider additional classes for out-of-distribution samples 
that do not belong to any of the domains
and encourage $G$ to not generate samples belonging to those classes. 
This idea can be formulated as follows.

First, we unify the real/fake discriminator and the domain classifier
into a single multiclass classifier $A$ that  
outputs $2K$ probabilities $p_A(k|\Vec{x})$ $(k=1,\ldots,2K)$
where $k=1,\ldots,K$ and $k=K+1,\ldots,2K$
correspond to the real domain classes and the fake classes, respectively.
Note that this differs from the multi-task classifier network 
mentioned above in that 
$p_A(k|\Vec{x})$ must now satisfy 
$\sum_{k=1}^{2K} p_A(k|\Vec{x})=1$.
Here, the $K$ fake classes can be seen as the classes for out-of-distribution samples.
Next, by using this multiclass classifier, 
we define 
\begin{align}
\mathcal{L}^A_{\rm adv}(A) =& 
-\mathbb{E}_{k\sim p(k), \Vec{y}\sim p_{\rm d}(\Vec{y}|k)}[\log p_A(k|\Vec{y})]
\nonumber\\
&-\mathbb{E}_{k\sim p(k), \Vec{x}\sim p_{\rm d}(\Vec{x})}[\log p_A(K\!+\!k|G(\Vec{x},k))],
\label{eq:advloss3_c}\\
\mathcal{L}^G_{\rm adv}(G) =&
- \mathbb{E}_{k\sim p(k), \Vec{x}\sim p_{\rm d}(\Vec{x})}[\log p_A(k|G(\Vec{x},k))]
\nonumber\\
&+ \mathbb{E}_{k\sim p(k), \Vec{x}\sim p_{\rm d}(\Vec{x})}[\log p_A(K\!+\!k|G(\Vec{x},k))],
\label{eq:advloss3_g}
\end{align}
as adversarial losses for classifier $A$ and generator $G$.
$\mathcal{L}^A_{\rm adv}(A)$ becomes small when $A$ correctly classifies 
$\Vec{y}\sim p_{\rm d}(\Vec{y}|k)$ as real speech samples in domain $k$
and $G(\Vec{x},k)$ as fake samples in domain $k$, whereas
$\mathcal{L}^G_{\rm adv}(G)$ becomes 
small when $G$ fools $A$ so that $G(\Vec{x},k)$ is misclassified by $A$ as 
real speech samples in domain $k$ and is not classified as fake samples.

We will show below that this minimax game reaches a global optimum when 
$p_{\rm d}(\Vec{y}|k) = p_{G}(\Vec{y}|k)$ for $k=1,\ldots,K$ 
if both $G$ and $A$ have infinite capacity, 
where $p_G(\Vec{y}|k)$ denotes 
the distribution of $\Vec{y} = G(\Vec{x},k)$ with $\Vec{x} \sim p_{\rm d}(\Vec{x})$.
We first consider the optimal classifier $A$ for any given generator $G$.
\begin{proposition}
For fixed $G$, 
$\mathcal{L}^A_{\rm adv}(A)$ is minimized when
\begin{align}
p_{A}^*(k|\Vec{y}) &= 
\frac{
p(k) p(\Vec{y}|k)
}{
\sum_k
p(k) p_{\rm d}(\Vec{y}|k)
+
\sum_k
p(k) p_{G}(\Vec{y}|k)
},
\label{eq:p_C_opt_1toK}
\\
p_{A}^*(K\!+\!k|\Vec{y})
&=
\frac{
p(k) p_{G}(\Vec{y}|k)
}{
\sum_k
p(k) p_{\rm d}(\Vec{y}|k)
+
\sum_k
p(k) p_{G}(\Vec{y}|k)
},
\label{eq:p_C_opt_K+1to2K}
\end{align}
for $k=1,\ldots,K$.
\end{proposition}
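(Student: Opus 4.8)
The plan is to adapt the classical derivation of the optimal GAN discriminator \cite{Goodfellow2014short} to the present multiclass setting, the one new ingredient being the simplex constraint $\sum_{k=1}^{2K} p_A(k|\Vec{y}) = 1$ that couples the real-class and fake-class outputs of $A$. First I would rewrite $\mathcal{L}^A_{\rm adv}(A)$ as a single integral over $\Vec{y}$. Since $p(k)=1/K$ and, by definition, $\Vec{y}=G(\Vec{x},k)$ with $\Vec{x}\sim p_{\rm d}(\Vec{x})$ is distributed according to $p_G(\Vec{y}|k)$, the second expectation in \refeq{advloss3_c} equals $-\sum_{k=1}^{K} p(k)\int p_G(\Vec{y}|k)\log p_A(K\!+\!k|\Vec{y})\,d\Vec{y}$, so that
\begin{align*}
\mathcal{L}^A_{\rm adv}(A) = - \int \sum_{k=1}^{K} p(k) \big[ & p_{\rm d}(\Vec{y}|k) \log p_A(k|\Vec{y}) \\
& + p_G(\Vec{y}|k) \log p_A(K\!+\!k|\Vec{y}) \big]\, d\Vec{y} .
\end{align*}

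Next I would observe that this objective is separable over $\Vec{y}$: for each fixed $\Vec{y}$ the integrand depends on $A$ only through the $2K$-vector $\big(p_A(1|\Vec{y}),\ldots,p_A(2K|\Vec{y})\big)$, which, since $A$ is assumed to have unlimited capacity, can be set to an arbitrary point of the probability simplex independently for each $\Vec{y}$. Minimizing $\mathcal{L}^A_{\rm adv}$ therefore reduces to the finite-dimensional problem, solved pointwise in $\Vec{y}$, of maximizing $\sum_{j=1}^{2K} c_j \log q_j$ over $q$ in the simplex, where $c_j = p(j)\,p_{\rm d}(\Vec{y}|j)$ for $1\le j\le K$ and $c_{K+k} = p(k)\,p_G(\Vec{y}|k)$ for $1\le k\le K$.

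The key step is the elementary fact that, for nonnegative $c_1,\ldots,c_n$ not all zero, $\sum_j c_j\log q_j$ is maximized over the simplex at $q_j^\star = c_j/C$ with $C=\sum_{j'} c_{j'}$. I would establish this via Gibbs' inequality: for any simplex point $q$, $\sum_j c_j\log(q_j^\star/q_j) = C\,\mathrm{KL}(q^\star\,\|\,q)\ge 0$, with equality iff $q=q^\star$; alternatively a one-line Lagrange-multiplier computation yields the same stationary point and strict concavity of $\log$ confirms it is the unique maximizer wherever $C>0$ (on the $\Vec{y}$-null set where $C=0$ the choice of $p_A$ is immaterial). Substituting $C=\sum_k p(k)p_{\rm d}(\Vec{y}|k)+\sum_k p(k)p_G(\Vec{y}|k)$ into $q^\star$ then reproduces exactly \refeqs{p_C_opt_1toK}{p_C_opt_K+1to2K}, reading the numerator $p(k)p(\Vec{y}|k)$ in \refeq{p_C_opt_1toK} as $p(k)p_{\rm d}(\Vec{y}|k)$.

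I do not expect a genuine obstacle; the only points requiring care are bookkeeping ones. The first is making the separability/capacity argument precise, i.e.\ justifying that the pointwise simplex maximizer can be realized by a single function $A$ over the (typically continuous) feature space. The second, and the conceptually important one, is to keep explicit that $A$ is one normalized classifier over $2K$ classes rather than $2K$ independent binary discriminators: it is precisely the resulting coupling through the shared denominator $C$ that the subsequent global-optimality analysis (which forces $p_{\rm d}(\Vec{y}|k)=p_G(\Vec{y}|k)$) will exploit. A minor technical nuisance is that the change of variables $\Vec{x}\mapsto G(\Vec{x},k)$ should be handled at the level of distributions, so that no Jacobian terms enter.
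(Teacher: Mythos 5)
Your proposal is correct and follows essentially the same route as the paper: both reduce the problem to a pointwise optimization of $\sum_j c_j \log q_j$ over the probability simplex under the constraint $\sum_{k=1}^{2K} p_A(k|\Vec{y})=1$, which the paper handles by the Lagrange-multiplier computation you also sketch. Your Gibbs'-inequality argument is a minor strengthening, since it certifies the stationary point as the unique global minimizer rather than relying on first-order conditions alone, but it does not constitute a different approach.
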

\begin{proof} 
By differentiating the Lagrangian
\begin{align}
{L}(A,\gamma) = 
\mathcal{L}^A_{\rm adv}(A) 
+ 
\int \gamma(\Vec{y})
\Bigg(
\sum_{k=1}^{2K} p_{A}(k|\Vec{y}) - 1 
\Bigg) 
\mbox{d}\Vec{y}
\end{align}
with respect to $p_{A}(k|\Vec{y})$
\begin{align}
\frac{\partial {L}(A,\gamma)}{\partial p_{A}(k|\Vec{y})}
=
\begin{cases}
- 
\frac{
p(k) p_{\rm d}(\Vec{y}|k)
}{
p_{A}(k|\Vec{y})
} + \gamma(\Vec{y})
&(1\le k \le K)
\\
-
\frac{
p(k) p_{G}(\Vec{y}|k)
}{
p_{A}(k|\Vec{y})
} + \gamma(\Vec{y})
&(K\!+\!1\le k \le 2K)
\end{cases}
\nonumber
\end{align}
and setting the result at zero, 
we obtain
\begin{align}
p_A(k|\Vec{y}) = 
\begin{cases}
p(k)p_{\rm d}(\Vec{y}|k)/\gamma(\Vec{y})
&(1\le k\le K)
\\
p(k)p_{G}(\Vec{y}|k)/\gamma(\Vec{y})
&(K\!+\!1\le k\le 2K)
\end{cases}.
\label{eq:p_C_opt_1}
\end{align}
Since $p_{A}(k|\Vec{y})$ must sum to unity,
the multiplier $\gamma$ must be
\begin{align}
\gamma(\Vec{y}) = \sum_{k=1}^K p(k)p_{\rm d}(\Vec{y}|k) + \sum_{k=1}^{K} p(k)p_{G}(\Vec{y}|k).
\label{eq:p_C_opt_2}
\end{align}
Substituting \refeq{p_C_opt_2} into \refeq{p_C_opt_1} concludes the proof.
\end{proof}

\begin{theorem}
The global optimum of the minimax game is achieved when 
$p_{\rm d}(\Vec{y}|k) = p_{G}(\Vec{y}|k)$ for $k=1,\ldots,K$.
\end{theorem}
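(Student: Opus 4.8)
The plan is to reproduce the classical Goodfellow-style optimality argument, adapted to the $2K$-class classifier. Proposition~1 already gives the best-response classifier $A^{*}$ for a fixed $G$, so the first step is to substitute $A=A^{*}$ into the \emph{generator's} adversarial loss $\mathcal{L}^{G}_{\rm adv}$ of \refeq{advloss3_g}, obtaining a virtual training criterion $C(G)$ that depends on $G$ only through the induced conditional distributions $p_{G}(\cdot|k)$. Then I would show that $C(G)$ is globally minimized exactly when $p_{G}(\cdot|k)=p_{\rm d}(\cdot|k)$ for every $k$, and that under the infinite-capacity assumption this minimizer is attainable. Throughout I would use the identity $\mathbb{E}_{\Vec{x}\sim p_{\rm d}(\Vec{x})}[f(G(\Vec{x},k))]=\mathbb{E}_{\Vec{y}\sim p_{G}(\Vec{y}|k)}[f(\Vec{y})]$, which is just the definition of $p_{G}(\cdot|k)$.

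The decisive computational point is that $p_{A}^{*}(k|\Vec{y})$ and $p_{A}^{*}(K\!+\!k|\Vec{y})$ in \refeqs{p_C_opt_1toK}{p_C_opt_K+1to2K} share the same denominator. Since $\mathcal{L}^{G}_{\rm adv}$ contains the combination $-\log p_{A}(k|G(\Vec{x},k))+\log p_{A}(K\!+\!k|G(\Vec{x},k))$, evaluating it at $A=A^{*}$ makes both the common denominator and the prior factor $p(k)$ cancel, so that
\begin{align*}
\mathcal{L}^{G}_{\rm adv}(G)\big|_{A=A^{*}}
&=\mathbb{E}_{k\sim p(k),\,\Vec{x}\sim p_{\rm d}(\Vec{x})}
\!\left[\log\frac{p_{G}(G(\Vec{x},k)|k)}{p_{\rm d}(G(\Vec{x},k)|k)}\right]\\
&=\sum_{k=1}^{K}p(k)\,\mathrm{KL}\big(p_{G}(\cdot|k)\,\big\|\,p_{\rm d}(\cdot|k)\big).
\end{align*}
Unlike the vanilla cross-entropy GAN, no Jensen--Shannon term survives: because $\mathcal{L}^{G}_{\rm adv}$ simultaneously pulls the real-class probability up and pushes the fake-class probability down, the normalizers cancel and one is left with a clean weighted sum of KL divergences. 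By Gibbs' inequality each term is nonnegative and vanishes iff $p_{G}(\cdot|k)=p_{\rm d}(\cdot|k)$ almost everywhere; since $p(k)=1/K>0$, the whole sum is nonnegative and equals zero iff $p_{G}(\cdot|k)=p_{\rm d}(\cdot|k)$ for all $k=1,\ldots,K$. With $G$ of infinite capacity these equalities are simultaneously realizable, so the global minimum $C(G)=0$ is attained precisely at $p_{\rm d}(\Vec{y}|k)=p_{G}(\Vec{y}|k)$, $k=1,\ldots,K$, which is the claimed global optimum.

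The step most in need of care — and the one I expect to attract scrutiny — is the legitimacy of plugging the $G$-dependent best response $A^{*}$ into the generator's separately-defined objective: this is the standard ``optimize the classifier fully, then the generator'' heuristic, and I would make it rigorous by noting that the pair $(G^{\star},A^{*}(G^{\star}))$ with $p_{G^{\star}}(\cdot|k)=p_{\rm d}(\cdot|k)$ is a genuine fixed point of the alternating updates — given $A^{*}$ no generator achieves loss below $0$, and given such a $G^{\star}$ the classifier $A^{*}$ is already optimal by Proposition~1. The remaining loose ends (the ``almost everywhere'' qualifier in Gibbs' inequality, and the fact that a single conditional generator with unlimited capacity can match $p_{\rm d}(\cdot|k)$ for each $k$ because $k$ enters $G$ as an input) are routine under the stated assumptions.
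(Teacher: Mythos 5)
Your proposal is correct and follows essentially the same route as the paper: substitute the optimal classifier from Proposition~1 into $\mathcal{L}^{G}_{\rm adv}$, observe that the shared denominator and the prior $p(k)$ cancel in the log-ratio $\log\bigl(p_{A}^{*}(k|\Vec{y})/p_{A}^{*}(K\!+\!k|\Vec{y})\bigr)$, and conclude that the resulting expected KL divergence is nonnegative and vanishes iff $p_{G}(\cdot|k)=p_{\rm d}(\cdot|k)$ for all $k$. Your added remarks on Gibbs' inequality and the fixed-point legitimacy of the best-response substitution are slightly more explicit than the paper's one-line conclusion, but the argument is the same.
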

\begin{proof}
By substituting \refeqs{p_C_opt_1toK}{p_C_opt_K+1to2K} into 
$\mathcal{L}^G_{\rm adv}(G)$, we can describe 
it as a function of $G$ only:
\begin{align}
\mathcal{L}^G_{\rm adv}(G) 
&=
- \mathbb{E}_{k\sim p(k), \Vec{y}\sim p_{G}(\Vec{y}|k)}
\left[\log 
\frac{
p_A^*(k|\Vec{y})
}{
p_A^*(K+k|\Vec{y})
}
\right]
\nonumber\\
&=
\mathbb{E}_{k\sim p(k), \Vec{y}\sim p_{G}(\Vec{y}|k)}
\left[\log 
\frac{p_{G}(\Vec{y}|k)}{p_{\rm d}(\Vec{y}|k)}
\right]
\nonumber\\
&=
\mathbb{E}_{k\sim p(k)}
{\rm KL}
[
p_{G}(\Vec{y}|k)\|
p_{\rm d}(\Vec{y}|k)
]
,
\end{align}
where ${\rm KL}[\cdot\|\cdot]$ denotes the
Kullback-Leibler (KL) divergence. 
Obviously, $\mathcal{L}^G_{\rm adv}(G)$ becomes 0 if and only if 
$p_{\rm d}(\Vec{y}|k) = p_{G}(\Vec{y}|k)$ for $k=1,\ldots,K$,
thus concluding the proof.
\end{proof}

As with 
the first two formulations,
we must also consider incorporating 
the cycle-consistency and identity mapping
losses to 
encourage $G(\Vec{x},k)$ to preserve 
the linguistic information in the input $\Vec{x}$. 
Overall, the training objectives to be minimized with respect to $G$ and $A$
become
\begin{align}
\mathcal{I}_{G}(G) =&
\lambda_{\rm adv}
\mathcal{L}^G_{\rm adv}(G) 
+
\lambda_{\rm cyc}
\mathcal{L}_{\rm cyc}(G)
+
\lambda_{\rm id}
\mathcal{L}_{\rm id}(G),
\\
\mathcal{I}_{A}(A) =&
\lambda_{\rm adv} 
\mathcal{L}^A_{\rm adv}(A).
\end{align}
We refer to this formulation as the ``augmented classifier StarGAN (A-StarGAN)'' (\reffig{astargan}).

A comparative look at the C-StarGAN \cite{Kameoka2018SLTshort_StarGAN-VC}
and A-StarGAN formulations  
may provide intuitive insights into the behavior of the A-StarGAN training.
Although not explicitly stated, 
with C-StarGAN, 
the minimax game played by $G$ and $D$
using \refeqs{advloss_d}{advloss_g} alone is shown to
correspond to 
minimizing the JS divergence between $p_G(\Vec{y}|k)$ and $p_{\rm d}(\Vec{y}|k)$.
While this minimax game 
only cares whether $G(\Vec{x},k)$ resembles real samples in domain $k$  
and is not concerned with whether $G(\Vec{x},k)$ is likely to belong to a different domain $k' \neq k$,
A-StarGAN
is designed to require $G(\Vec{x},k)$ to keep away from all the speaker domains except $k$ 
by explicitly penalizing 
$G(\Vec{x},k)$ for resembling real samples in domain $k' \neq k$.
We expect that this particular mechanism can contribute to enhancing the conversion effect.  
The domain classification loss given as \refeq{clsloss_g}
in C-StarGAN is expected to play this role; 
however, its effect can be limited for the reason already mentioned.
With A-StarGAN, the classifier augmented with 
the fake classes creates
additional decision boundaries, each of which is expected to 
partition the region of each domain 
into in-distribution and out-of-distribution regions
thanks to the adversarial learning and thus 
encourage the generator to generate samples that resemble real in-distribution samples only.
It should also be noted that in C-StarGAN,
when the domain classification loss comes into play, 
the training objective does not allow for 
an interpretation of the optimization process as distribution fitting, unlike in A-StarGAN. 
This is also true for the W-StarGAN formulation.

From the above discussion,
we can also think of another version of the A-StarGAN formulation,
in which the $K$ fake classes are merged into a single fake class 
(so the classifier $A$ now produces only $K+1$ probabilities) and 
the adversarial losses for classifier $A$ and generator $G$ are defined as
\begin{align}
\mathcal{L}^A_{\rm adv}(A) =& 
-\mathbb{E}_{k\sim p(k), \Vec{y}\sim p_{\rm d}(\Vec{y}|k)}[\log p_A(k|\Vec{y})]
\nonumber\\
&-\mathbb{E}_{k\sim p(k), \Vec{x}\sim p_{\rm d}(\Vec{x})}[\log p_A(K\!+\!1|G(\Vec{x},k))],
\label{eq:advloss4_c}\\
\mathcal{L}^G_{\rm adv}(G) =&
- \mathbb{E}_{k\sim p(k), \Vec{x}\sim p_{\rm d}(\Vec{x})}[\log p_A(k|G(\Vec{x},k))]
\nonumber\\
&+ \mathbb{E}_{k\sim p(k), \Vec{x}\sim p_{\rm d}(\Vec{x})}[\log p_A(K\!+\!1|G(\Vec{x},k))].
\label{eq:advloss4_g}
\end{align}
It should be noted that
the minimax game using these losses no longer leads to
the minimization of the KL divergence between $p_G(\Vec{y}|k)$ and $p_{\rm d}(\Vec{y}|k)$.
However, we still believe it can work reasonably well
if the augmented classifier really behaves in the way discussed above. 

\subsection{Acoustic feature}

In this paper, we choose to use mel-cepstral coefficients (MCCs) computed 
from a spectral envelope obtained using WORLD \cite{Morise2016short,pyworld_url}
as the acoustic feature to be converted.
Although it would also be interesting to consider 
directly converting time-domain signals (for example, like in \cite{Serra2019short}), 
given the recent significant advances in high-quality neural vocoder systems \cite{vandenOord2016short,Tamamori2017short,Kalchbrenner2018short,Mehri2016short,Jin2018short,vandenOord2017short,Ping2019short,Prenger2018short,Kim2018short,Wang2018short,Tanaka2018short}, 
we would expect to be able to generate high-quality signals 
using a neural vocoder once we obtain a sufficient set of acoustic features.  
Such systems can be advantageous in that
the model size for the generator can be made small enough to 
allow the system to run in real-time and 
work well even when a limited amount of training data is available.

At training time, 
we normalize 
each element $x_{q,n}$ of 
the MCC sequence $\Vec{x}$ to
$x_{q,n} \leftarrow (x_{q,n} - \psi_q)/\zeta_q$
where $q$ denotes the dimension index of the MCC sequence,
$n$ denotes the frame index, 
and $\psi_q$ and $\zeta_q$ denote
the means and standard deviations of the $q$-th MCC sequence 
within all the \footnote{
We chose to compute the mean and standard deviation only from the voiced segments in the training samples since we wanted them to be less dependent on the lengths of the silent segments. 
The voiced segments were detected using WORLD.
}{voiced} segments of the training samples of the same speaker.

\subsection{Conversion process}
\label{subsec:conversion}

After training $G$, we can convert the acoustic feature sequence $\Vec{x}$ 
of an input utterance with
\begin{align}
\hat{\Vec{y}} = G(\Vec{x}, k),
\end{align}
where $k$ denotes the target domain.
Once $\hat{\Vec{y}}$
has been obtained, 
we adjust
the mean and variance of the generated feature sequence 
so that they match the pretrained 
mean and variance of the feature vectors of the target speaker.
We can then generate a time-domain signal using the WORLD vocoder or any 
recently developed neural vocoder \cite{vandenOord2016short,Tamamori2017short,Kalchbrenner2018short,Mehri2016short,Jin2018short, vandenOord2017short, Ping2019short,Prenger2018short,Kim2018short,Wang2018short,Tanaka2018short,Kumar2019,Yamamoto2020}.

\subsection{Network architectures}
\label{subsec:netarch}

The architectures of all the networks are 
detailed in \reffigss{generator2d}{classifier}.
As detailed below, 
$G$ is designed to take an acoustic feature sequence as an input
and output an acoustic feature sequence of the same length  
so as to learn conversion rules that capture time dependencies.
Similarly, 
$D$, $C$ and $A$ are designed to take acoustic feature sequences as inputs
and generate sequences of probabilities.
There are two ways to incorporate the class index $k$ into $G$ or $D$.
One is to simply represent it as a one-hot vector 
and append it to the input of each layer.
The other is to retrieve a continuous vector 
given $k$ from a dictionary of embeddings and append it to each layer input,
as in our previous work \cite{Kameoka2020_ConvS2S-VC,Kameoka2020_VTN}.
In this work, we adopted the former way though both performed almost the same.
As detailed in \reffigss{generator2d}{classifier}, all the networks are designed
using fully convolutional architectures with gated linear units
(GLUs) \cite{Dauphin2017short}. The output of the GLU
block used is defined as 
$\mathsf{GLU}(\Vec{X}) = \Vec{X}_1 \odot \mathsf{sigmoid}(\Vec{X}_2)$, 
where $\Vec{X}$ is the layer input, 
$\Vec{X}_1$ and $\Vec{X}_2$ are equally sized arrays into which $\Vec{X}$ 
is split along the channel dimension,
and $\mathsf{sigmoid}$ is a sigmoid gate function.
Like long short-term memory units, 
GLUs can reduce the vanishing gradient
problem for deep architectures by providing a linear path for
the gradients while retaining nonlinear capabilities.

\noindent
{\bf Generator:}
As described in \reffigs{generator2d}{generator1d},
we use 
 a 2D CNN 
or a 1D CNN
that takes 
an acoustic feature sequence $\Vec{x}$ as an input
to design $G$, where $\Vec{x}$ is treated
as an image of size $Q\times N$ with $1$ channel in the 2D case 
or as a signal sequence of length $N$ with $Q$ channels in the 1D case.

\noindent
{\bf Real/Fake Discriminator:}
We leverage the idea of PatchGANs \cite{Isola2017short} to
design a real/fake discriminator or a Lipschitz continuous function $D$, 
which assigns a probability or a score 
to each local segment of an input feature sequence to
indicate whether it is real or fake. 
More specifically, $D$ takes 
an acoustic feature sequence $\Vec{y}$ as an input and 
produces a sequence of probabilities (with C-StarGAN)
or scores (with W-StarGAN) that measures how likely each segment of $\Vec{y}$ 
is to be real speech features.
With C-StarGAN, the final output of $D$ is given by the product of all these probabilities,
and with W-StarGAN, the final output of $D$ is given by the sum of all these scores.

\noindent
{\bf Domain Classifier/Augmented Classifier:}
We also design the domain classifier $C$ and the augmented classifier $A$ 
so that each of them takes 
an acoustic feature sequence $\Vec{y}$ as an input and 
produces a sequence of class probability distributions that measure 
how likely each segment of $\Vec{y}$ is to belong to domain $k$.
The final output of $p_C(k|\Vec{y})$ or $p_A(k|\Vec{y})$
is given by the product of all these distributions.

\begin{figure*}[t!]
\centering
  \centerline{\includegraphics[height=3.3cm]{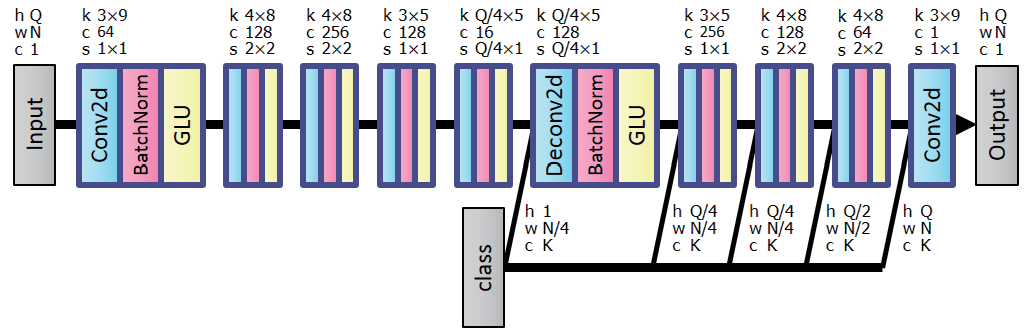}}
  \vspace{-0ex}
  \caption{Network architectures of the generator designed using 2D convolution layers. Here, the input and output of each layer are interpreted as images, where ``h'', ``w'' and ``c'' denote the height, width, and channel number, respectively. ``Conv2d'', ``BatchNorm'', ``GLU'', and ``Deconv2d'' denote 2D convolution, batch normalization, gated linear unit, and 2D transposed convolution layers, respectively. 
Batch normalization is applied to each channel and each height of the input.
``k'', ``c'' and ``s'' denote the kernel size, output channel number, and stride size of a convolution layer, respectively.  
The class index, represented as a one-hot vector, is concatenated
to the input of each convolution layer along the channel direction
after being repeated along the height and width directions 
so that it has a shape compatible with the input. 
}
\label{fig:generator2d}
\medskip
\centering
  \centerline{\includegraphics[height=3.3cm]{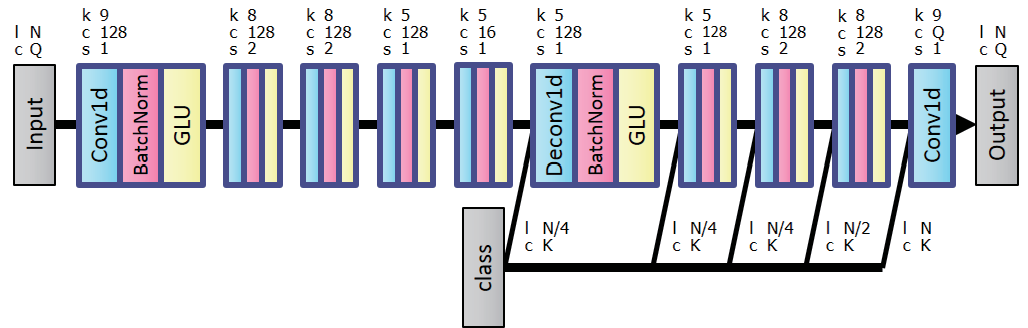}}
  \vspace{-0ex}
  \caption{Network architectures of the generator designed using 1D convolution layers. 
  Here, the input and output of the generator are interpreted as signal sequences, where ``l'' and ``c'' denote the length and channel number, respectively. ``Conv1d'', ``BatchNorm'', ``GLU'', and ``Deconv1d'' denote 1D convolution, batch normalization, gated linear unit, and 1D transposed convolution layers, respectively.  
Batch normalization is applied to each channel of the input.
The class index vector is concatenated
to the input of each convolution layer after being repeated along the time direction.}
\label{fig:generator1d}
\medskip
\centering
\begin{minipage}[t]{.48\linewidth}
  \centerline{\includegraphics[height=3.3cm]{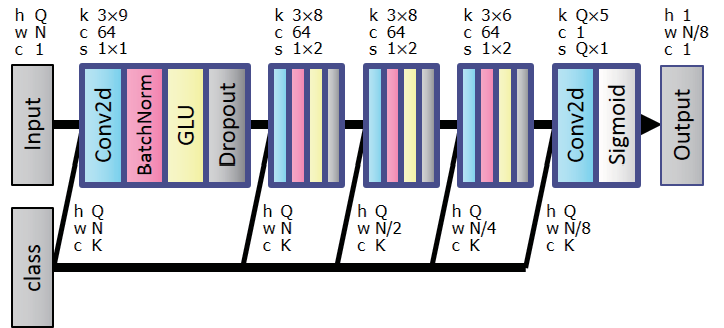}}
  \vspace{-0ex}
  \caption{Network architectures of the conditional discriminator in C-StarGAN. ``Sigmoid'' denotes an element-wise sigmoid function.}
\label{fig:discriminator}
\end{minipage}
\hfill
\begin{minipage}[t]{.48\linewidth}
  \centerline{\includegraphics[height=3.3cm]{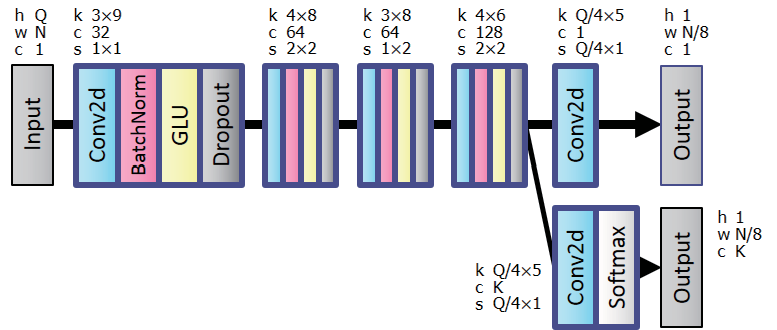}}
  \vspace{-0ex}
  \caption{Network architectures of the multi-task classifier in W-StarGAN.
  ``Softmax'' denotes a softmax function applied to the channel dimension.}
\label{fig:multi-task_discriminator}
\end{minipage}
\\
\medskip
  \centerline{\includegraphics[height=1.86cm]{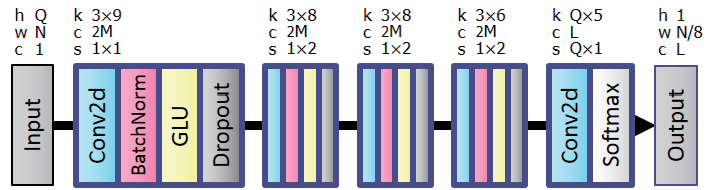}}
  \vspace{-0ex}
  \caption{Network architectures of the classifier in C-StarGAN and A-StarGAN.
  The output channel number $L$ is set to $K$ for the domain classifier in C-StarGAN, $2K$ for the augmented classifier in A-StarGAN1, and $K+1$ for the augmented classifier in A-StarGAN2.
  The channel number $M$ in the intermediate layers is set to 16 for the domain classifier in C-StarGAN and 64 for the augmented classifier in A-StarGAN1 and A-StarGAN2, respectively.}
\label{fig:classifier}
\end{figure*}

\section{Experiments}
\label{sec:experiments}

\subsection{Datasets}
\label{subsec:datasets}

To confirm the effects of the proposed StarGAN formulations, 
we conducted objective and subjective evaluation experiments involving 
a nonparallel speaker identity conversion task.
For the experiments, we used two datasets,
the CMU ARCTIC database \cite{Kominek2004short} and 
the Voice Conversion Challenge (VCC) 2018 dataset \cite{Lorenzo-Trueba2018short}.
The former consists of recordings of 
two female US English speakers (`clb' and `slt') and
two male US English speakers (`bdl' and `rms')
sampled at 16,000 Hz. 
The latter consists of recordings 
of six female and six male US English speakers
sampled at 22,050 Hz.
From the VCC2018 dataset, 
we selected two female speakers (`SF1' and `SF2') 
and two male speakers (`SM1' and `SM2').
Thus, for each dataset,
there were $K=4$ speakers, so 
in total there were twelve different combinations of source and target speakers.

\subsubsection{The CMU ARCTIC Dataset}

The CMU ARCTIC dataset consisted of four speakers, 
each reading the same 1,132 short sentences.
For each speaker, 
we used the first 1,000 and the latter 132 sentences
for training and evaluation.
To simulate a nonparallel training scenario,
we divided the first 1,000 sentences equally
into four groups and used only
the first, second, third, and fourth groups 
for speakers clb, bdl, slt, and rms,
so as not to use 
the same sentences between different speakers.
The training utterances of speakers clb, bdl, slt, and rms
were about 12, 11, 11, and 14 minutes long in total, respectively.
For each utterance, we extracted
a spectral envelope,
a logarithmic fundamental frequency (log $F_0$), and
aperiodicities (APs) every 8 ms using the
WORLD analyzer \cite{Morise2016short,pyworld_url}. 
We then extracted
$Q=28$ MCCs from 
each spectral envelope using the Speech Processing Toolkit (SPTK) \cite{pysptk_url}. 

\subsubsection{The VCC2018 Dataset}

The subset of the VCC2018 dataset consisted of four speakers,
each reading 
the same 116 short sentences (about seven minutes long in total).
For each speaker, 
we used the first 81 and the latter 35 sentences 
(about five and two minutes long in total) 
for training and evaluation.
Although we could actually construct a parallel corpus using this dataset,
we took care not to take advantage of it, because our purpose was 
to simulate a nonparallel training scenario.
For each utterance, 
we extracted
a spectral envelope,
a log $F_0$, APs, and $Q=36$ MCCs every 5 ms using the
WORLD analyzer \cite{Morise2016short,pyworld_url} and 
the SPTK \cite{pysptk_url} in the same manner. 

For both datasets,
the $F_0$ contours were converted using the logarithm Gaussian normalized
transformation described in \cite{Liu2007short}. 
The APs were used directly without modification. 
The signals of the converted speech were obtained using the methods described in \refsubsec{conversion}.

\subsection{Baseline Methods}

We chose the VAE-based \cite{Hsu2016short} and VAEGAN-based \cite{Hsu2017short} 
nonparallel VC methods
and our previously proposed CycleGAN-VC \cite{Kaneko2017dshort}
for comparison. 
In CycleGAN-VC,
we used the same network architectures shown in \reffigss{generator2d}{discriminator}
to design the generator and discriminator.
To clarify how close the proposed method can get to the performance achieved 
by one of the best-performing {\it parallel} VC methods,
we also chose a GMM-based open-source method 
called ``sprocket'' \cite{Kobayashi2018short} for comparison.
This method was used as a baseline in the VCC2018 \cite{Lorenzo-Trueba2018short}.
Note that
since sprocket is a parallel VC method, 
we tested it only on the VCC2018 dataset.
To run these methods,
we used the source codes provided by the authors \cite{Hsu2016url,Hsu2017url,Kobayashi2018url}.

\subsection{Hyperparameter Settings}

In the following, we use the abbreviations A-StarGAN1 and A-StarGAN2
to indicate the A-StarGAN formulations using \refeqs{advloss3_c}{advloss3_g} 
and using \refeqs{advloss4_c}{advloss4_g} 
as the adversarial losses. 
Hence, four different versions of the StarGAN formulations 
(namely C-StarGAN, W-StarGAN, A-StarGAN1 and A-StarGAN2) were considered for comparison.

All the networks were trained simultaneously with random initialization.
Adam optimization \cite{Kingma2015short} was used for model training, where 
the mini-batch size was 16. 
The settings of
the regularization parameters 
$\lambda_{\rm adv}$,
$\lambda_{\rm cls}$, $\lambda_{\rm cyc}$, $\lambda_{\rm id}$,
and $\lambda_{\rm gp}$,
the learning rates $\alpha_{G}$ and $\alpha_{D/C}$ for the generator and the discriminator/classifier, and the iteration number $I$
are listed in \reftab{hyperparameter_settings}.
For CycleGAN and
all the StarGAN versions,
the exponential decay rate
for the first moment was set at 0.9 for the generator and 
0.5 for the discriminator and classifier.
\reffig{loss_curves} shows the learning curves of 
C-StarGAN, W-StarGAN, and A-StarGAN1 under the above settings.
We also performed batch normalization with
the training mode at test time.
Note that all these hyperparameters were tuned on the VCC2018 database.

\begin{table}
\centering
\caption{Hyperparameter settings}
\begin{tabular}{r V{3} r|r|r|r}
\thline
&CycleGAN&C-StarGAN&W-StarGAN&A-StarGAN\\\thline
$\lambda_{\rm adv}$&1&1&10&1\\
$\lambda_{\rm cls}$&1&1&10&1\\
$\lambda_{\rm cyc}$&1&1&1&1\\
$\lambda_{\rm id}$&1&1&1&1\\
$\lambda_{\rm gp}$&--&--&10&--\\
$\alpha_{\rm G}$&$5\times 10^{-4}$&$5\times 10^{-4}$&$5\times 10^{-4}$&$5\times 10^{-4}$\\
$\alpha_{\rm D/C}$&$5\times 10^{-6}$&$2\times 10^{-6}$&$5\times 10^{-6}$&$2\times 10^{-6}$\\
$\rho$&1&1&1&1\\
$I$&$3.5\times 10^5$&$7\times 10^5$&$3.5\times 10^5$&$3.5\times 10^5$\\
\thline
\end{tabular}
\label{tab:hyperparameter_settings}
\end{table}

\begin{figure*}[t!]
\centering
\begin{minipage}[t]{.325\linewidth}
  \centerline{\includegraphics[width=.98\linewidth]{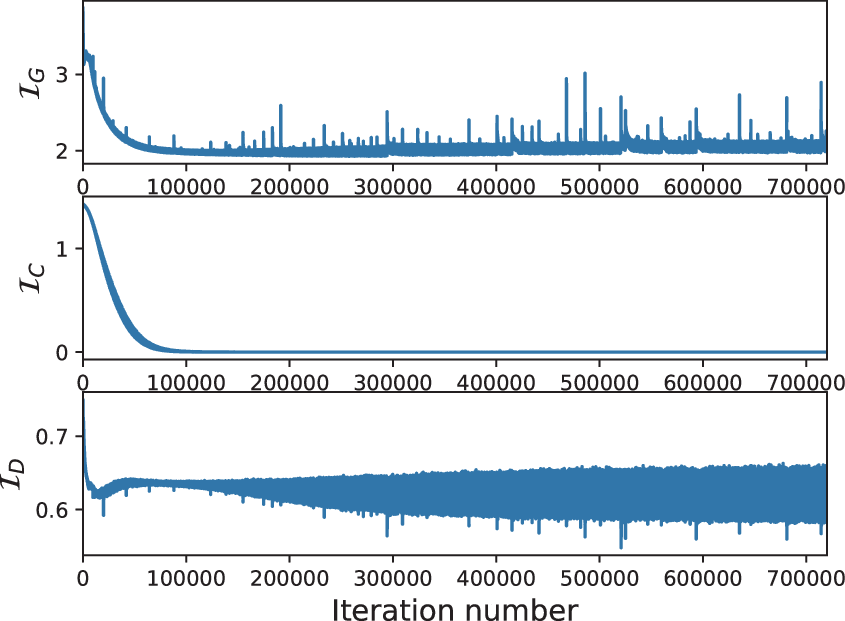}}
  \centerline{\scriptsize (a)}
\end{minipage}
\begin{minipage}[t]{.325\linewidth}
  \centerline{\includegraphics[width=.98\linewidth]{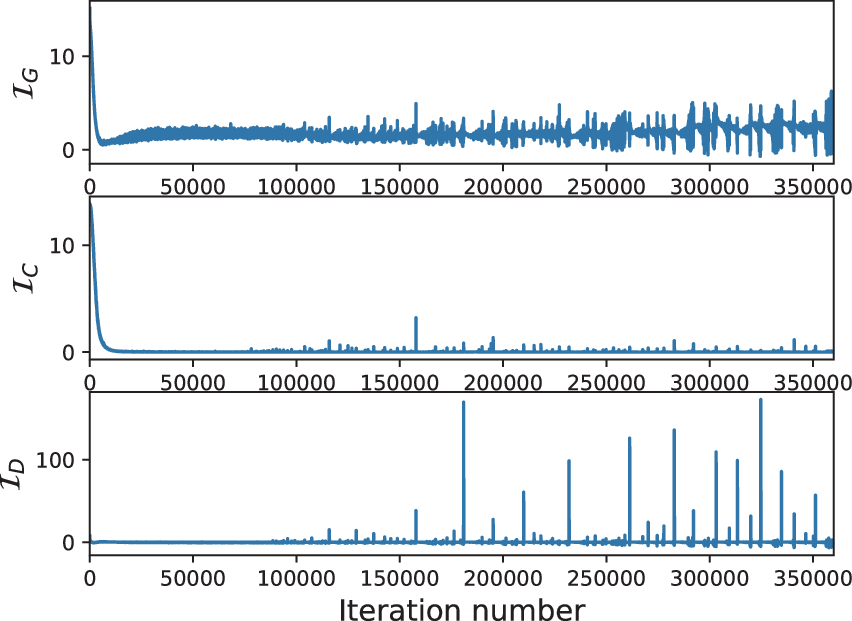}}
  \centerline{\scriptsize (b)}
\end{minipage}
\begin{minipage}[t]{.325\linewidth}
  \centerline{\includegraphics[width=.98\linewidth]{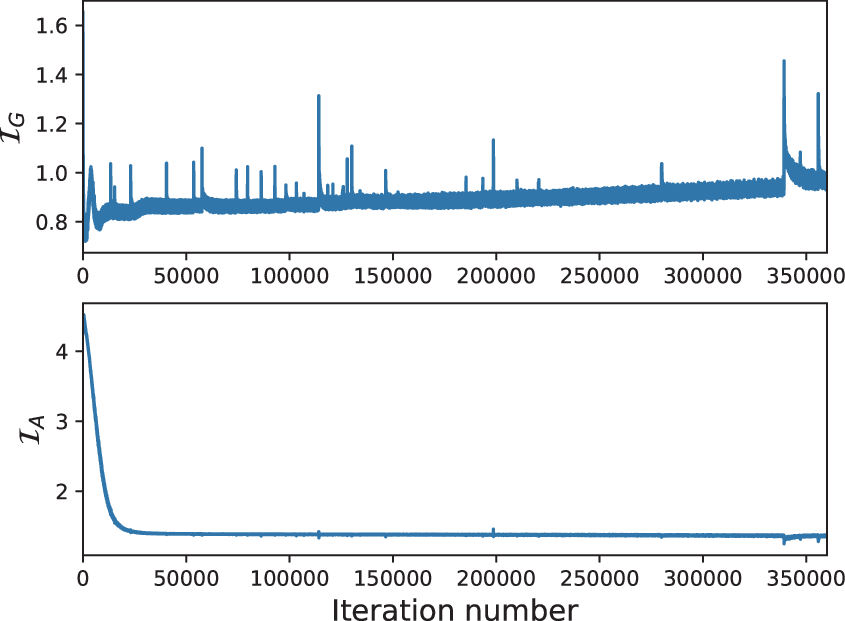}}
  \centerline{\scriptsize (c)}
\end{minipage}
\caption{Training loss curves of (a) C-StarGAN, (b) W-StarGAN, and (c) A-StarGAN1.}
\label{fig:loss_curves}
\end{figure*}

\subsection{Objective Performance Measure}

In each dataset,
the test set consists 
of speech samples of each speaker reading the same sentences.
Thus, 
the quality of a converted feature sequence can be assessed by
comparing it with the feature sequence of the target speaker reading the 
same sentence.
Here, we used the average of the mel-cepstral distortions (MCDs) 
taken along the 
dynamic time warping (DTW)
path between converted and target feature sequences
as the objective performance measure for each test utterance.

\subsection{Objective Evaluations}

\begin{table*}[t!]
\caption{MCD Comparisons of Different Network Configurations of $G$ on the CMU ARCTIC Dataset}
\centering
\begin{tabular}{l | l V{3} c|c|c|c|c|c|c|c|c|c}
\thline
\multicolumn{2}{c V{3}}{Speakers}&
\multicolumn{2}{c|}{CycleGAN}&
\multicolumn{2}{c|}{C-StarGAN}&
\multicolumn{2}{c|}{W-StarGAN}&
\multicolumn{2}{c|}{A-StarGAN1}&
\multicolumn{2}{c}{A-StarGAN2}
\\\hline
src&trg&1D&2D&1D&2D&1D&2D&1D&2D&1D&2D\\\thline
      &   bdl&$\!\!\bm{8.34\pm .16}\!\!$&$\!\!8.87\pm .15\!\!$
             &$\!\!\bm{7.84\pm .13}\!\!$&$\!\!8.47\pm .14\!\!$
             &$\!\!7.72\pm .13\!\!$&$\!\!\bm{7.37\pm .11}\!\!$
             &$\!\!\bm{7.50\pm .14}\!\!$&$\!\!8.04\pm .15\!\!$
             &$\!\!\bm{7.57\pm .14}\!\!$&$\!\!7.59\pm .12\!\!$\\
   clb&   slt&$\!\!7.13\pm .06\!\!$&$\!\!\bm{6.99\pm .06}\!\!$
             &$\!\!7.45\pm .07\!\!$&$\!\!\bm{6.87\pm .08}\!\!$
             &$\!\!7.02\pm .06\!\!$&$\!\!\bm{6.63\pm .05}\!\!$
             &$\!\!\bm{6.56\pm .06}\!\!$&$\!\!6.99\pm .07\!\!$
             &$\!\!\bm{6.64\pm .06}\!\!$&$\!\!6.92\pm .06\!\!$\\
      &   rms&$\!\!\bm{7.64\pm .06}\!\!$&$\!\!8.38\pm .08\!\!$
             &$\!\!8.31\pm .08\!\!$&$\!\!\bm{8.02\pm .08}\!\!$
             &$\!\!6.87\pm .06\!\!$&$\!\!\bm{6.81\pm .06}\!\!$
             &$\!\!\bm{7.01\pm .09}\!\!$&$\!\!7.39\pm .08\!\!$
             &$\!\!7.23\pm .06\!\!$&$\!\!\bm{7.22\pm .06}\!\!$\\\hline
      &   clb&$\!\!8.43\pm .14\!\!$&$\!\!\bm{8.41\pm .13}\!\!$
             &$\!\!8.03\pm .12\!\!$&$\!\!\bm{7.75\pm .13}\!\!$
             &$\!\!7.40\pm .12\!\!$&$\!\!\bm{7.03\pm .12}\!\!$
             &$\!\!\bm{7.57\pm .13}\!\!$&$\!\!7.80\pm .12\!\!$
             &$\!\!\bm{7.45\pm .14}\!\!$&$\!\!7.62\pm .12\!\!$\\
   bdl&   slt&$\!\!\bm{8.10\pm .11}\!\!$&$\!\!8.29\pm .13\!\!$
             &$\!\!8.24\pm .10\!\!$&$\!\!\bm{8.08\pm .12}\!\!$
             &$\!\!7.36\pm .10\!\!$&$\!\!\bm{6.85\pm .06}\!\!$
             &$\!\!\bm{7.06\pm .10}\!\!$&$\!\!7.66\pm .10\!\!$
             &$\!\!\bm{7.22\pm .09}\!\!$&$\!\!7.27\pm .08\!\!$\\
      &   rms&$\!\!8.15\pm .12\!\!$&$\!\!\bm{8.08\pm .14}\!\!$
             &$\!\!8.53\pm .12\!\!$&$\!\!\bm{8.09\pm .14}\!\!$
             &$\!\!7.63\pm .15\!\!$&$\!\!\bm{7.45\pm .17}\!\!$
             &$\!\!\bm{7.34\pm .16}\!\!$&$\!\!8.09\pm .12\!\!$
             &$\!\!\bm{7.45\pm .15}\!\!$&$\!\!7.72\pm .14\!\!$\\\hline
      &   clb&$\!\!7.18\pm .07\!\!$&$\!\!\bm{7.03\pm .07}\!\!$
             &$\!\!7.48\pm .09\!\!$&$\!\!\bm{7.04\pm .07}\!\!$
             &$\!\!7.10\pm .07\!\!$&$\!\!\bm{6.72\pm .08}\!\!$
             &$\!\!\bm{7.02\pm .08}\!\!$&$\!\!7.12\pm .09\!\!$
             &$\!\!\bm{6.75\pm .07}\!\!$&$\!\!7.19\pm .10\!\!$\\
   slt&   bdl&$\!\!\bm{8.48\pm .10}\!\!$&$\!\!8.68\pm .12\!\!$
             &$\!\!\bm{7.94\pm .08}\!\!$&$\!\!8.59\pm .12\!\!$
             &$\!\!7.79\pm .09\!\!$&$\!\!\bm{7.40\pm .08}\!\!$
             &$\!\!\bm{7.43\pm .09}\!\!$&$\!\!8.00\pm .10\!\!$
             &$\!\!\bm{7.53\pm .08}\!\!$&$\!\!7.60\pm .09\!\!$\\
      &   rms&$\!\!8.70\pm .10\!\!$&$\!\!\bm{8.52\pm .08}\!\!$
             &$\!\!8.67\pm .08\!\!$&$\!\!\bm{8.55\pm .10}\!\!$
             &$\!\!7.12\pm .10\!\!$&$\!\!\bm{7.00\pm .09}\!\!$
             &$\!\!\bm{7.26\pm .10}\!\!$&$\!\!7.78\pm .10\!\!$
             &$\!\!\bm{7.55\pm .10}\!\!$&$\!\!7.84\pm .13\!\!$\\\hline
      &   clb&$\!\!\bm{7.98\pm .07}\!\!$&$\!\!8.33\pm .10\!\!$
             &$\!\!8.04\pm .08\!\!$&$\!\!\bm{7.88\pm .06}\!\!$
             &$\!\!7.24\pm .08\!\!$&$\!\!\bm{6.94\pm .07}\!\!$
             &$\!\!\bm{7.23\pm .08}\!\!$&$\!\!7.48\pm .07\!\!$
             &$\!\!\bm{7.16\pm .07}\!\!$&$\!\!7.28\pm .07\!\!$\\
   rms&   bdl&$\!\!8.18\pm .17\!\!$&$\!\!\bm{8.12\pm .18}\!\!$
             &$\!\!\bm{8.04\pm .18}\!\!$&$\!\!8.49\pm .18\!\!$
             &$\!\!8.52\pm .19\!\!$&$\!\!\bm{8.16\pm .19}\!\!$
             &$\!\!\bm{7.60\pm .19}\!\!$&$\!\!8.33\pm .20\!\!$
             &$\!\!\bm{7.67\pm .15}\!\!$&$\!\!7.78\pm .19\!\!$\\
      &   slt&$\!\!8.99\pm .09\!\!$&$\!\!\bm{8.67\pm .09}\!\!$
             &$\!\!8.57\pm .11\!\!$&$\!\!\bm{8.52\pm .13}\!\!$
             &$\!\!7.41\pm .10\!\!$&$\!\!\bm{7.11\pm .10}\!\!$
             &$\!\!\bm{7.08\pm .09}\!\!$&$\!\!7.85\pm .11\!\!$
             &$\!\!\bm{7.34\pm .11}\!\!$&$\!\!7.55\pm .11\!\!$\\\hline
\multicolumn{2}{c V{3}}{All pairs}
             &$\!\!\bm{8.11\pm .04}\!\!$&$\!\!8.20\pm .04\!\!$
             &$\!\!8.09\pm .04\!\!$&$\!\!\bm{8.03\pm .04}\!\!$
             &$\!\!7.43\pm .04\!\!$&$\!\!\bm{7.12\pm .04}\!\!$
             &$\!\!\bm{7.22\pm .04}\!\!$&$\!\!7.71\pm .04\!\!$
             &$\!\!\bm{7.30\pm .03}\!\!$&$\!\!7.46\pm .03\!\!$\\\thline   
\end{tabular}
\label{tab:mcd_config_comp}
\end{table*}

\begin{table*}[t!]
\caption{MCD Comparisons with baseline methods on the CMU ARCTIC dataset}
\centering
\begin{tabular}{l | l V{3} c|c|c|c|c|c|c}
\thline
\multicolumn{2}{c V{3}}{Speakers}&
\multirow{2}{*}{VAE}&\multirow{2}{*}{VAEGAN}&
\multirow{2}{*}{CycleGAN}&\multirow{2}{*}{C-StarGAN}&
\multirow{2}{*}{W-StarGAN}&\multirow{2}{*}{A-StarGAN1}&
\multirow{2}{*}{A-StarGAN2}
\\\cline{1-2}
source&target&&&&&&&\\\thline
      &   bdl&$7.85\pm .10$&$8.82\pm .14$&$8.34\pm .16$&$8.47\pm .14$
             &$\bm{7.37\pm .11}$&$7.50\pm .14$&$7.57\pm .14$\\
   clb&   slt&$7.08\pm .07$&$8.11\pm .06$&$7.13\pm .06$&$6.87\pm .08$
             &$6.63\pm .05$&$\bm{6.56\pm .06}$&$6.64\pm .06$\\
      &   rms&$7.70\pm .05$&$8.14\pm .07$&$7.64\pm .06$&$8.02\pm .08$
             &$\bm{6.81\pm .06}$&$7.01\pm .09$&$7.23\pm .06$\\\hline
      &   clb&$7.68\pm .09$&$8.86\pm .10$&$8.43\pm .14$&$7.75\pm .13$
             &$\bm{7.03\pm .12}$&$7.57\pm .13$&$7.45\pm .14$\\
   bdl&   slt&$7.39\pm .09$&$8.15\pm .08$&$8.10\pm .11$&$8.08\pm .12$
             &$\bm{6.85\pm .06}$&$7.06\pm .10$&$7.22\pm .09$\\
      &   rms&$7.99\pm .12$&$8.28\pm .11$&$8.15\pm .12$&$8.09\pm .14$
             &$7.45\pm .17$&$\bm{7.34\pm .16}$&$7.45\pm .15$\\\hline
      &   clb&$6.96\pm .07$&$8.36\pm .09$&$7.18\pm .07$&$7.04\pm .07$
             &$\bm{6.72\pm .08}$&$7.02\pm .08$&$6.75\pm .07$\\
   slt&   bdl&$7.44\pm .08$&$7.60\pm .09$&$8.48\pm .10$&$8.59\pm .12$
             &$\bm{7.40\pm .08}$&$7.43\pm .09$&$7.53\pm .08$\\
      &   rms&$7.72\pm .10$&$8.39\pm .11$&$8.70\pm .10$&$8.55\pm .10$
             &$\bm{7.00\pm .09}$&$7.26\pm .10$&$7.55\pm .10$\\\hline
      &   clb&$7.81\pm .07$&$8.64\pm .09$&$7.98\pm .07$&$7.88\pm .06$
             &$\bm{6.94\pm .07}$&$7.23\pm .08$&$7.16\pm .07$\\
   rms&   bdl&$8.02\pm .15$&$8.19\pm .17$&$8.18\pm .17$&$8.49\pm .18$
             &$8.16\pm .19$&$\bm{7.60\pm .19}$&$7.67\pm .15$\\
      &   slt&$7.88\pm .09$&$8.20\pm .11$&$8.99\pm .09$&$8.52\pm .13$
             &$7.11\pm .10$&$\bm{7.08\pm .09}$&$7.34\pm .11$\\\hline
\multicolumn{2}{c V{3}}{All pairs}
             &$7.63\pm .03$&$8.31\pm .03$&$8.11\pm .04$&$8.03\pm .04$
             &$\bm{7.12\pm .04}$&$7.22\pm .04$&$7.30\pm .03$\\\thline   
\end{tabular}
\label{tab:mcd_baseline_comp_arctic}
\end{table*}

\begin{table*}[t!]
\caption{MCD Comparisons with baseline methods on the VCC2018 dataset}
\centering
\begin{tabular}{l | l V{3} c|c|c|c|c|c|c|c}
\thline
\multicolumn{2}{c V{3}}{Speakers}&\multicolumn{7}{c|}{nonparallel methods}&parallel method\\\hline
source&target&VAE&VAEGAN&CycleGAN&C-StarGAN&W-StarGAN&A-StarGAN1&A-StarGAN2&sprocket\\\thline
      &   SM1&$7.66\pm 0.12$&$7.70\pm 0.12$&$7.72\pm 0.13$&$7.52\pm 0.12$
             &$\bm{ 7.26\pm 0.12}$&$7.32\pm 0.13$&$7.27\pm 0.13$&$6.91\pm 0.12$\\
   SF1&   SF2&$7.53\pm 0.12$&$7.43\pm 0.12$&$7.35\pm 0.16$&$7.20\pm 0.14$
             &$7.16\pm 0.13$&$7.05\pm 0.12$&$\bm{6.98\pm 0.15}$&$6.70\pm 0.13$\\
      &   SM2&$8.06\pm 0.14$&$8.04\pm 0.15$&$7.91\pm 0.13$&$7.92\pm 0.14$
             &$7.67\pm 0.12$&$7.69\pm 0.12$&$\bm{7.58\pm 0.12}$&$7.06\pm 0.12$\\\hline
      &   SF1&$8.25\pm 0.10$&$8.20\pm 0.13$&$8.03\pm 0.12$&$7.87\pm 0.10$
             &$7.69\pm 0.10$&$7.58\pm 0.10$&$\bm{7.45\pm 0.10}$&$7.01\pm 0.11$\\
   SM1&   SF2&$7.43\pm 0.11$&$7.23\pm 0.12$&$6.95\pm 0.12$&$6.97\pm 0.12$
             &$6.95\pm 0.10$&$6.71\pm 0.12$&$\bm{6.66\pm 0.11}$&$6.30\pm 0.11$\\
      &   SM2&$7.92\pm 0.11$&$7.82\pm 0.10$&$7.20\pm 0.09$&$7.32\pm 0.11$
             &$7.24\pm 0.09$&$\bm{7.01\pm 0.11}$&$7.08\pm 0.10$&$6.58\pm 0.10$\\\hline
      &   SF1&$7.97\pm 0.13$&$7.83\pm 0.12$&$7.65\pm 0.13$&$7.59\pm 0.12$
             &$7.59\pm 0.10$&$7.43\pm 0.10$&$\bm{7.40\pm 0.11}$&$7.21\pm 0.11$\\
   SF2&   SM1&$7.38\pm 0.11$&$7.37\pm 0.10$&$7.04\pm 0.11$&$7.00\pm 0.11$
             &$6.91\pm 0.12$&$\bm{6.82\pm 0.12}$&$6.83\pm 0.13$&$6.77\pm 0.11$\\
      &   SM2&$7.92\pm 0.12$&$7.78\pm 0.11$&$7.64\pm 0.12$&$7.54\pm 0.13$
             &$7.45\pm 0.12$&$7.49\pm 0.13$&$\bm{7.48\pm 0.10}$&$6.85\pm 0.12$\\\hline
      &   SF1&$8.33\pm 0.15$&$8.20\pm 0.16$&$8.13\pm 0.17$&$8.01\pm 0.17$
             &$7.84\pm 0.15$&$7.75\pm 0.16$&$\bm{7.67\pm 0.14}$&$7.31\pm 0.12$\\
   SM2&   SM1&$7.73\pm 0.14$&$7.66\pm 0.14$&$7.20\pm 0.13$&$7.20\pm 0.12$
             &$7.07\pm 0.12$&$6.99\pm 0.13$&$\bm{6.97\pm 0.13}$&$6.88\pm 0.11$\\
      &   SF2&$7.74\pm 0.14$&$7.65\pm 0.14$&$7.34\pm 0.16$&$7.25\pm 0.15$
             &$7.27\pm 0.14$&$7.03\pm 0.15$&$\bm{6.98\pm 0.15}$&$6.78\pm 0.15$\\\hline
\multicolumn{2}{c V{3}}{All pairs}
             &$7.83\pm 0.05$&$7.74\pm 0.05$&$7.51\pm 0.05$&$7.45\pm 0.05$
             &$7.35\pm 0.04$&$7.24\pm 0.05$&$\bm{7.19\pm 0.05}$&$6.86\pm 0.04$\\\thline   
\end{tabular}
\label{tab:mcd_baseline_comp_vcc2018}
\end{table*}


First,
we evaluated the performance of each StarGAN version with
different network configurations of $G$. The detailed settings for
these configurations are shown in \reffigs{generator2d}{generator1d}.
The network architectures of the conditional discriminator and domain classifier
in C-StarGAN, the multi-task classifier in W-StarGAN, and the
augmented classifier in A-StarGAN are shown in 
\reffigss{discriminator}{classifier}.
\reftab{mcd_config_comp} shows the average MCDs 
along with standard errors
obtained with these network configurations. 
The results show 
that 
the CycleGAN, C-StarGAN, W-StarGAN, A-StarGAN1, and A-StarGAN2 methods
performed better 
with $G$ designed using 
1D-, 2D-, 2D-, 1D-, and 1D-CNNs, respectively.
In the following,
we only present the results obtained 
with these configurations.

\reftabs{mcd_baseline_comp_arctic}{mcd_baseline_comp_vcc2018}
show the MCDs obtained with the proposed and baseline methods.
As the results show, 
W-StarGAN 
and A-StarGAN1
performed best and next best on the CMU ARCTIC dataset,
and A-StarGAN2 and
A-StarGAN1 performed best and next best of all the nonparallel methods 
on the VCC 2018 dataset.
All the StarGAN versions
performed consistently better than CycleGAN.
Since both 
CycleGAN and C-StarGAN use the cross-entropy measure to define the adversarial losses, 
the superiority of C-StarGAN over 
CycleGAN reflects 
the effect of the many-to-many extension. 
Now, let us turn to the comparisons of the four StarGAN versions.
From the results, we can see that W-StarGAN performed better than C-StarGAN on both datasets, revealing the advantage of the training objective defined using the Wasserstein distance with the gradient penalty.
We also confirmed that A-StarGAN1\&2 
performed even better than W-StarGAN on the VCC2018 dataset,
though it performed
slightly worse on the CMU ARCTIC dataset. 
We also confirmed that all the StarGAN versions 
could not yield higher performance than sprocket.
Given the fact that sprocket had the
advantage of using parallel data for the model training, 
we consider the current result to be promising, since
the proposed methods are already advantageous in that they can
be applied in nonparallel training scenarios.

\begin{table*}[t!]
\begin{minipage}{.45\linewidth}
\caption{Real/Fake Discrimination Accuracy (\%)}
\centering
\begin{tabular}{l | l V{3} c|c|c}
\thline
\multicolumn{2}{c V{3}}{Speakers}&
\multirow{2}{*}{C-StarGAN}&
\multirow{2}{*}{A-StarGAN1}&
\multirow{2}{*}{A-StarGAN2}
\\\cline{1-2}
\!source\!\!&\!target\!&&&\\\thline
      &   bdl&$63.19\pm 2.79$&$48.10\pm .27$&$41.80\pm 1.93$\\
   clb&   slt&$76.56\pm 3.70$&$47.42\pm .41$&$28.43\pm 2.90$\\
      &   rms&$16.36\pm 4.97$&$51.51\pm .40$&$40.43\pm 2.72$\\\hline
      &   clb&$17.73\pm 2.43$&$46.77\pm .45$&$20.49\pm 2.10$\\
   bdl&   slt&$87.83\pm 1.98$&$47.64\pm .29$&$29.46\pm 2.38$\\
      &   rms&$3.93\pm 2.48$&$51.00\pm .37$&$38.14\pm 2.64$\\\hline
      &   clb&$22.56\pm 2.48$&$47.82\pm .41$&$22.37\pm 2.43$\\
   slt&   bdl&$69.25\pm 2.61$&$48.42\pm .30$&$40.71\pm 2.38$\\
      &   rms&$9.56\pm 3.78$&$50.53\pm .48$&$33.65\pm 2.71$\\\hline
      &   clb&$28.81\pm 2.31$&$48.48\pm .47$&$25.74\pm 2.32$\\
   rms&   bdl&$60.52\pm 2.78$&$48.60\pm .48$&$40.41\pm 2.41$\\
      &   slt&$72.18\pm 4.01$&$47.53\pm .55$&$29.19\pm 2.48$\\\hline
\multicolumn{2}{c V{3}}{All pairs}
             &$44.04\pm 1.73$&$48.65\pm .14$&$32.56\pm .80$\\\thline   
\end{tabular}
\label{tab:dis_output_arctic}
\end{minipage}
\begin{minipage}{.55\linewidth}
\caption{Speaker Classification Accuracy (\%)}
\centering
\begin{tabular}{l | l V{3} c|c|c|c}
\thline
\multicolumn{2}{c V{3}}{Speakers}&
\multirow{2}{*}{C-StarGAN}&
\multirow{2}{*}{W-StarGAN}&
\multirow{2}{*}{A-StarGAN1}&
\multirow{2}{*}{A-StarGAN2}
\\\cline{1-2}
\!source\!\!&\!target\!&&&&\\\thline
      &   bdl&$96.07\pm 2.92$&$99.99\pm .00$&$99.83\pm .05$&$98.58\pm .40$\\
   clb&   slt&$96.29\pm 2.67$&$99.70\pm .28$&$96.92\pm 1.98$&$80.19\pm 5.54$\\
      &   rms&$93.29\pm 3.72$&$99.97\pm .02$&$99.97\pm .01$&$92.34\pm 3.64$\\\hline
      &   clb&$94.23\pm 2.57$&$99.87\pm .07$&$99.38\pm .33$&$87.90\pm 3.53$\\
   bdl&   slt&$96.22\pm 2.27$&$98.63\pm .93$&$99.48\pm .23$&$86.98\pm 3.97$\\
      &   rms&$90.91\pm 3.95$&$99.98\pm .01$&$99.93\pm .02$&$93.43\pm 2.31$\\\hline
      &   clb&$80.78\pm 5.88$&$99.38\pm .49$&$98.78\pm .99$&$93.66\pm 1.96$\\
   slt&   bdl&$96.15\pm 2.58$&$99.99\pm .00$&$99.72\pm .07$&$93.94\pm 3.02$\\
      &   rms&$92.24\pm 3.98$&$99.99\pm .00$&$99.90\pm .05$&$90.95\pm 3.47$\\\hline
      &   clb&$86.04\pm 5.55$&$98.55\pm .94$&$99.42\pm .31$&$92.81\pm 2.76$\\
   rms&   bdl&$95.65\pm 2.54$&$99.99\pm .00$&$98.51\pm 1.14$&$95.74\pm 1.74$\\
      &   slt&$99.98\pm .01$&$99.99\pm .01$&$99.36\pm .42$&$89.11\pm 3.48$\\\hline
\multicolumn{2}{c V{3}}{All pairs}
             &$93.15\pm 1.06$&$99.67\pm .12$&$99.27\pm .22$&$91.30\pm .96$\\\thline   
\end{tabular}
\label{tab:cls_output_arctic}
\end{minipage}
\end{table*}

Balancing the learning of the players in a minimax game is essential in the GAN framework. 
The probabilities of the feature sequence converted from each test sample  
being real and produced by the target speaker 
may provide an indication of how successfully the generator, discriminator, and classifier
have been trained in a balanced manner.
\reftabs{dis_output_arctic}{cls_output_arctic} show
the mean outputs of the discriminator and classifier
of C-StarGAN, W-StarGAN, and A-StarGAN1\&2 at test time.
Note that since 
the discriminator in W-StarGAN produces scores (instead of probabilities), which are not straightforward to interpret, 
we 
have omitted them in \reftab{dis_output_arctic}.
As for the augmented classifier in A-StarGAN1\&2,
if we use $p_{k,n}$ to denote an element of the classifier output
corresponding to the probability of the classifier input 
belonging to class $k$ at \footnote{
Since the classifier is designed to contain 
three downsampling layers, the output sequence becomes $2^3=8$ times shorter than the input sequence. 
Hence, $n$ corresponds to a segment consisting of eight consecutive frames.
This is why here we have used ``segment'' rather than ``frame'' to signify $n$. 
}{segment $n$}, 
the values 
$\sum_{k=1}^{K} p_{k,n}$ 
and
$p_{k,n}/\sum_{k'=1}^{K} p_{k',n}$
correspond to the probabilities 
of the classifier input being real and produced by speaker $k$, 
respectively,
at that segment\footnote{
In A-StarGAN1, $k \in\{1,\ldots,K\}$ indicates a class corresponding to real speaker $k$,
whereas $K+k$ indicates a class corresponding to fake speaker $k$.
Thus,
the marginal probability of an input being real at segment $n$
can be expressed as $\sum_{k=1}^{K} p_{k,n}$.
The conditional probability of an input being produced by speaker $k$ at segment $n$, 
given that the input is real,
can be expressed as $p_{k,n}/\sum_{k'=1}^{K} p_{k',n}$.
The same applies to A-StarGAN2.
}{}. 
The means of these values over $n$ of all the utterances
along with standard errors
are shown in
\reftabs{dis_output_arctic}{cls_output_arctic}.
As \reftabs{dis_output_arctic}{cls_output_arctic} indicate, 
the generators in all the StarGAN versions
were successful in 
confusing the discriminator and
making the classifier believe
that the feature sequence converted from each test sample
was produced by the target speaker.

The modulation spectra of MCC sequences are known to be  
quantities that are closely related to the perceived quality and naturalness of speech \cite{Takamichi2016short}. 
\reffig{modspec_all} shows an example of the average 
modulation spectra of the converted MCC sequences obtained with the proposed and baseline methods
along with those of the real speech of the target speaker.
As this example shows, the modulation spectra obtained with 
the CycleGAN-based method and 
all the StarGAN-based methods were relatively closer to those of real speech 
than the spectra obtained with the VAE-based and VAEGAN-based methods over the entire frequency range, 
thanks to the adversarial training strategy. 

\begin{figure*}[t!]
\centering
\begin{minipage}[t]{.195\linewidth}
  \centerline{\includegraphics[width=.98\linewidth]{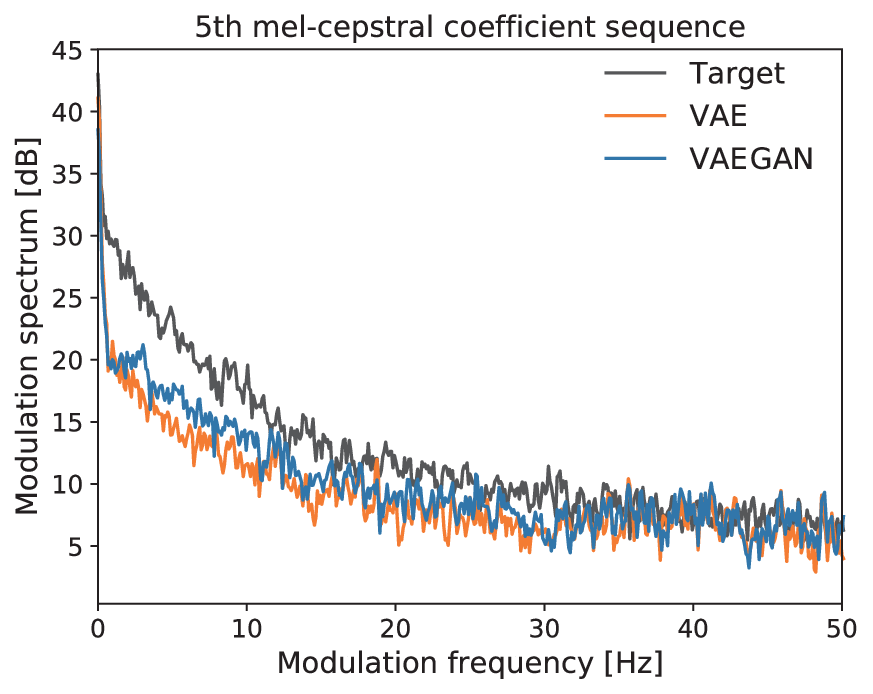}}
  \medskip
  \centerline{\includegraphics[width=.98\linewidth]{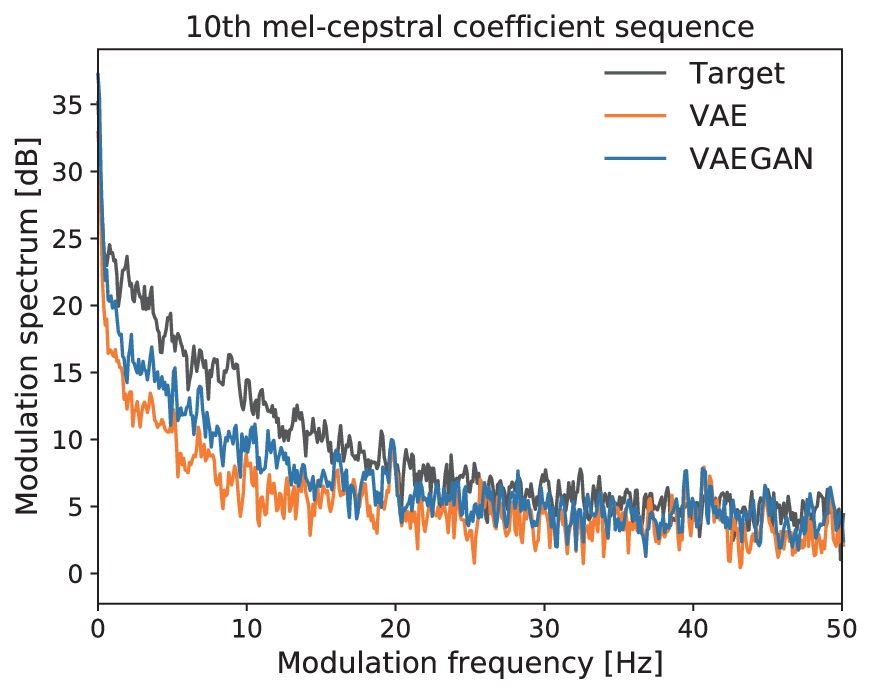}}
  \medskip
  \centerline{\includegraphics[width=.98\linewidth]{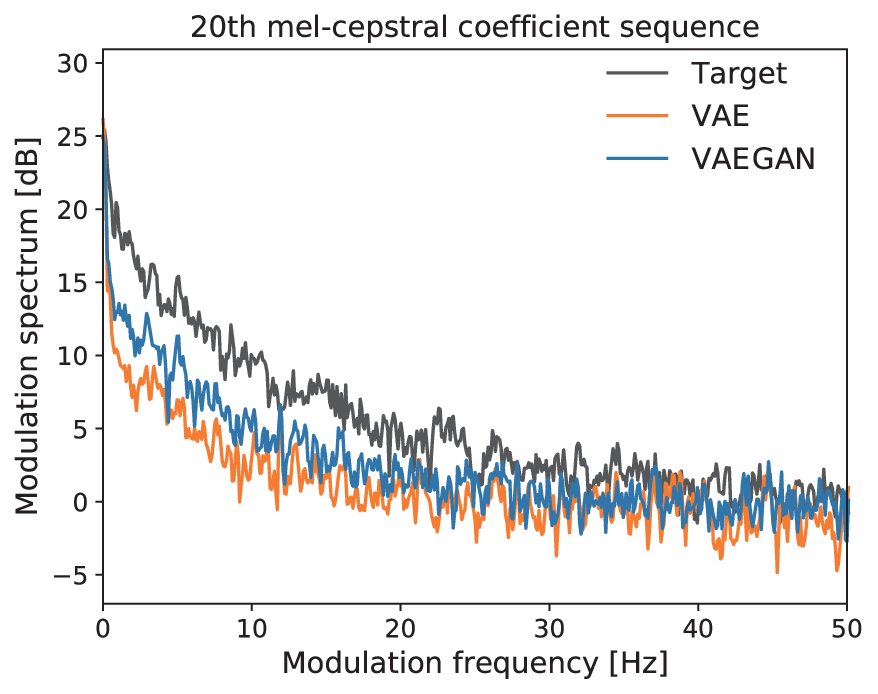}}
\end{minipage}
\begin{minipage}[t]{.195\linewidth}
  \centerline{\includegraphics[width=.98\linewidth]{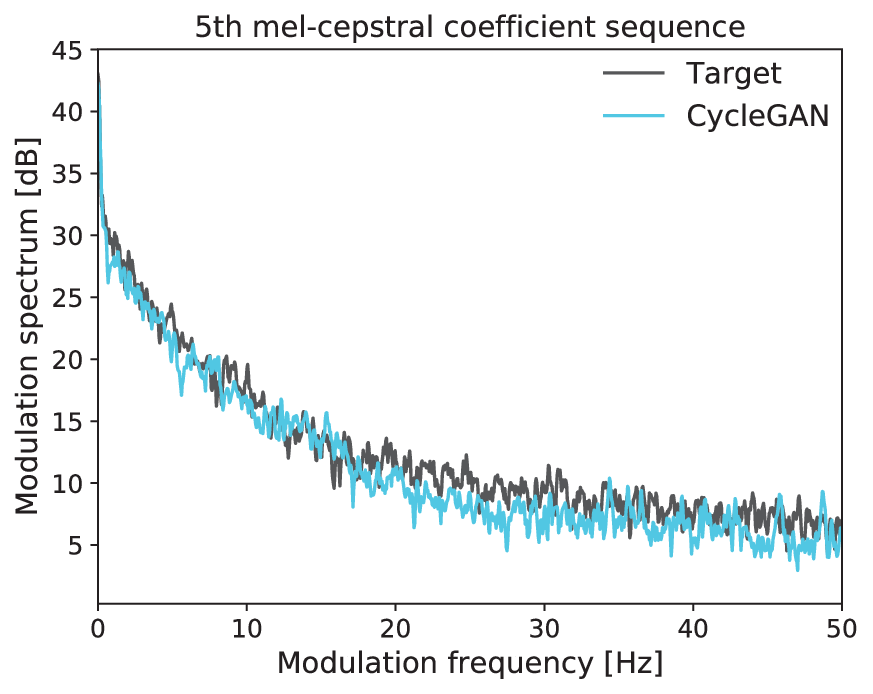}}
  \medskip
  \centerline{\includegraphics[width=.98\linewidth]{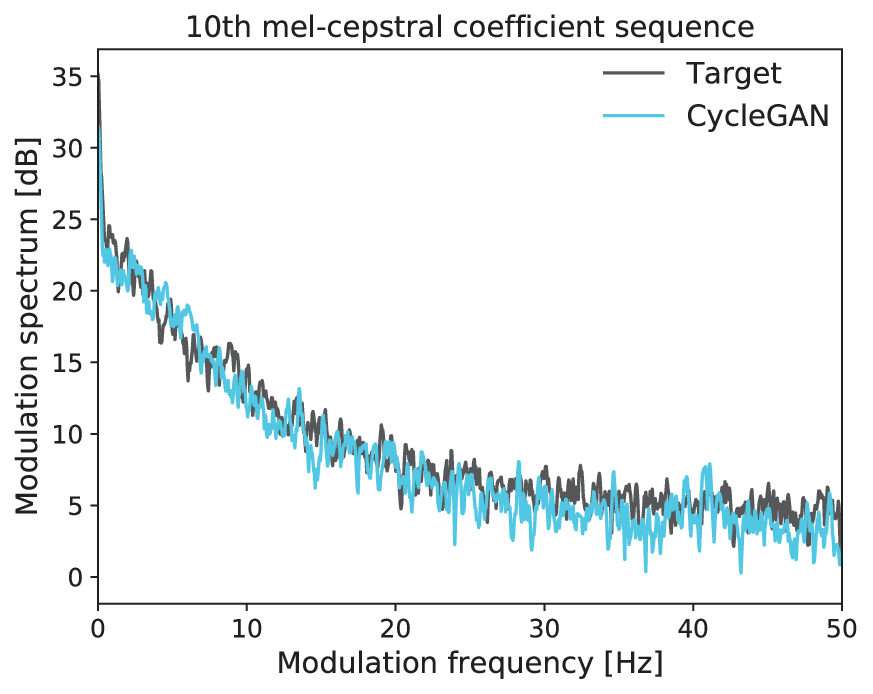}}
  \medskip
  \centerline{\includegraphics[width=.98\linewidth]{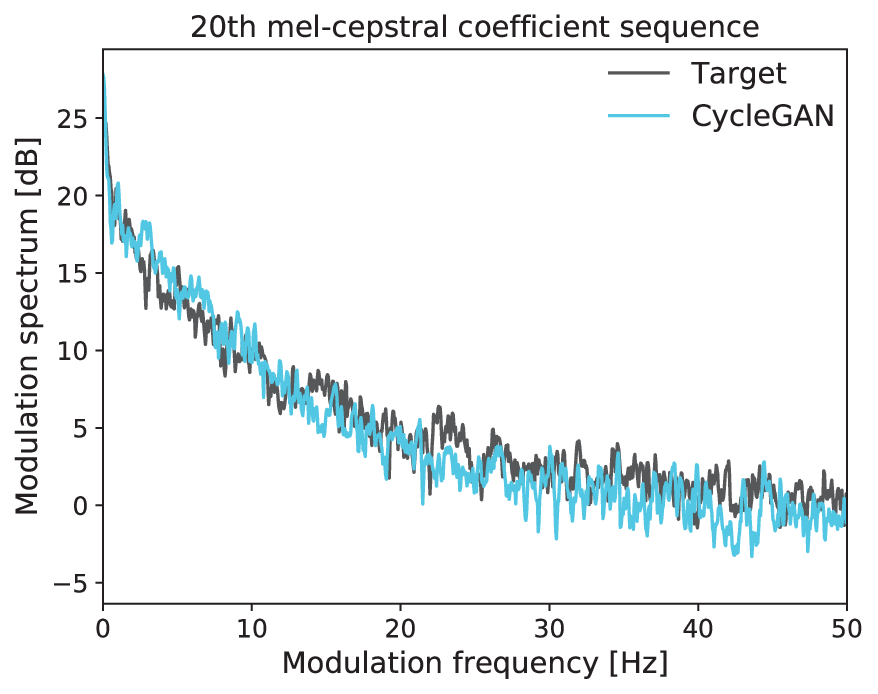}}
\end{minipage}
\begin{minipage}[t]{.195\linewidth}
  \centerline{\includegraphics[width=.98\linewidth]{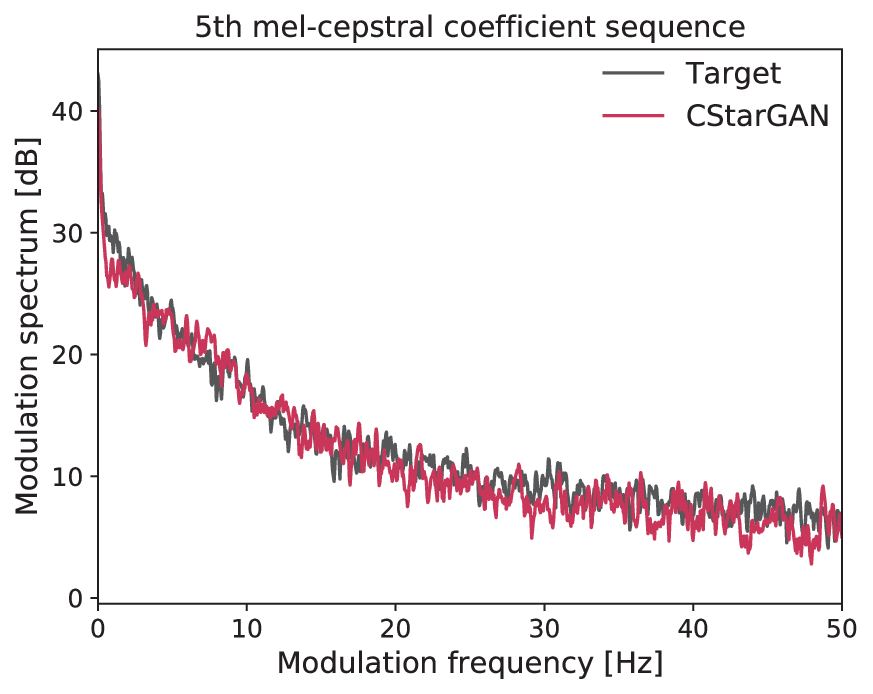}}
  \medskip
  \centerline{\includegraphics[width=.98\linewidth]{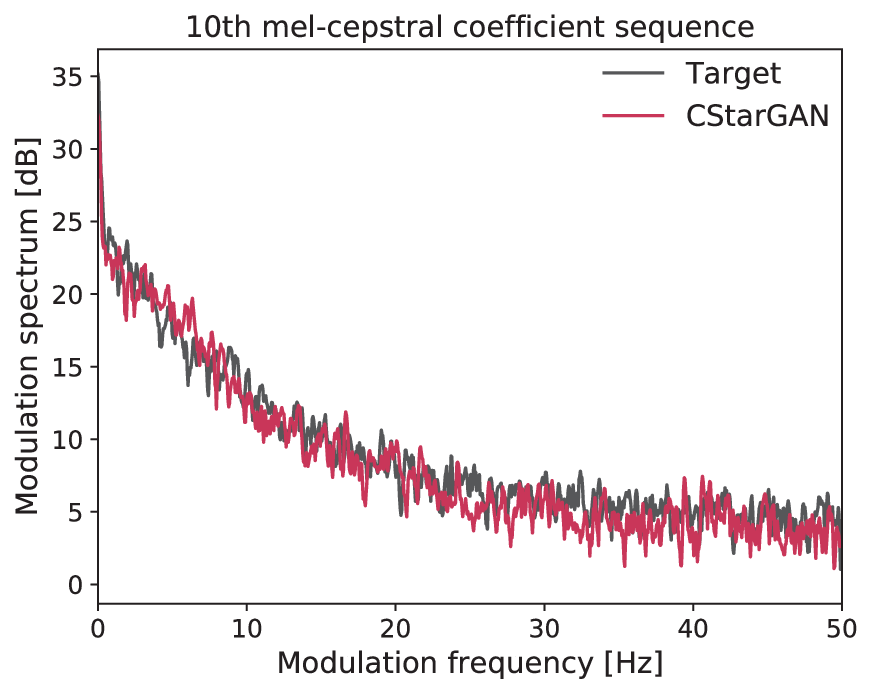}}
  \medskip
  \centerline{\includegraphics[width=.98\linewidth]{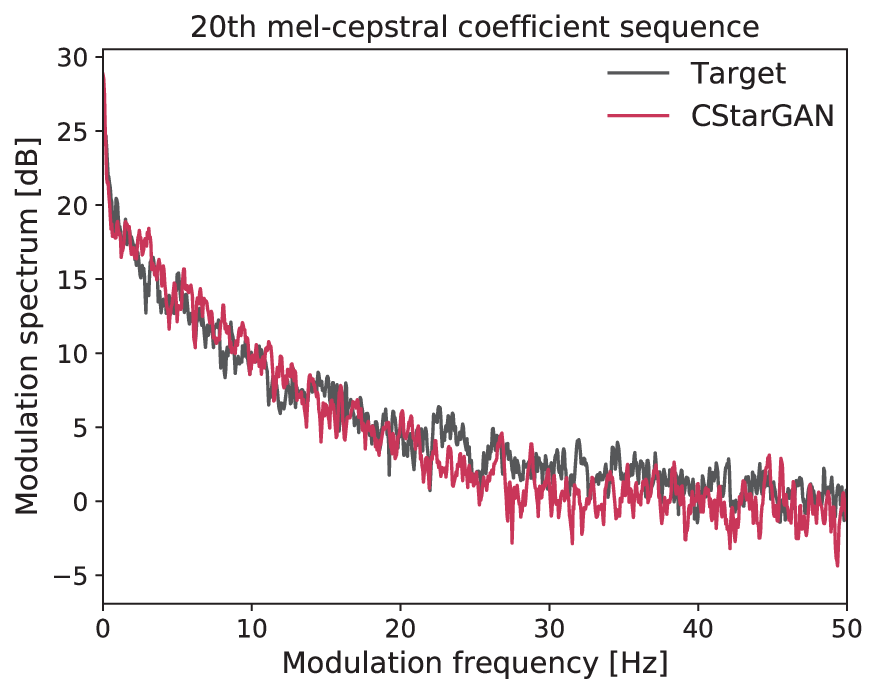}}
\end{minipage}
\centering
\begin{minipage}[t]{.195\linewidth}
  \centerline{\includegraphics[width=.98\linewidth]{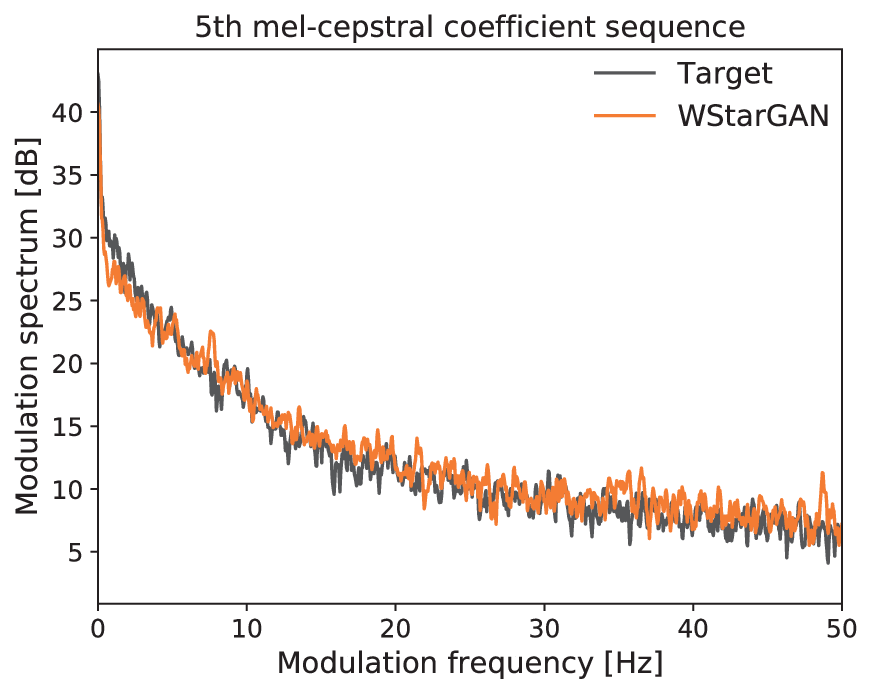}}
  \medskip
  \centerline{\includegraphics[width=.98\linewidth]{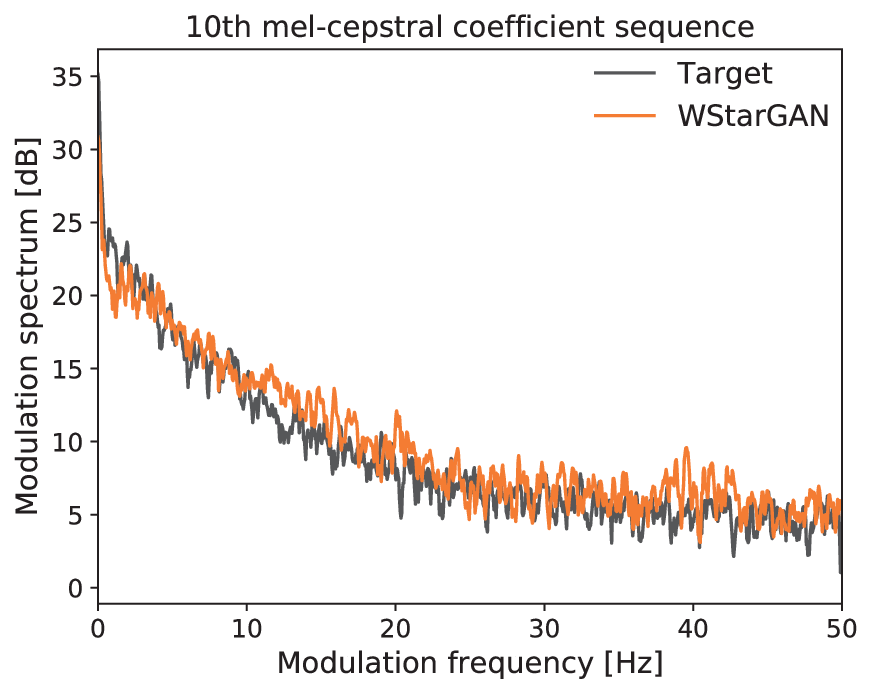}}
  \medskip
  \centerline{\includegraphics[width=.98\linewidth]{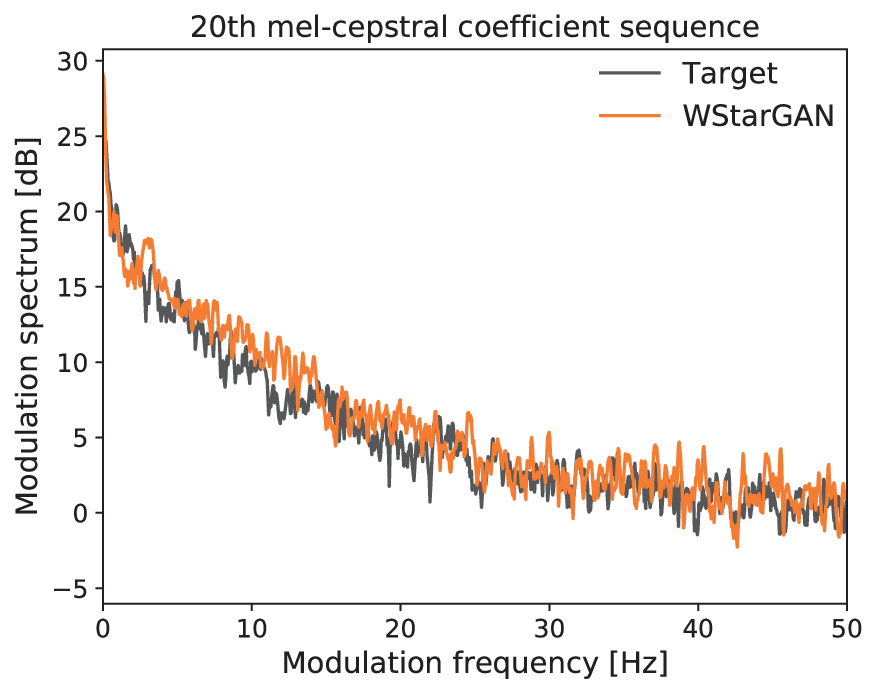}}
\end{minipage}
\begin{minipage}[t]{.195\linewidth}
  \centerline{\includegraphics[width=.98\linewidth]{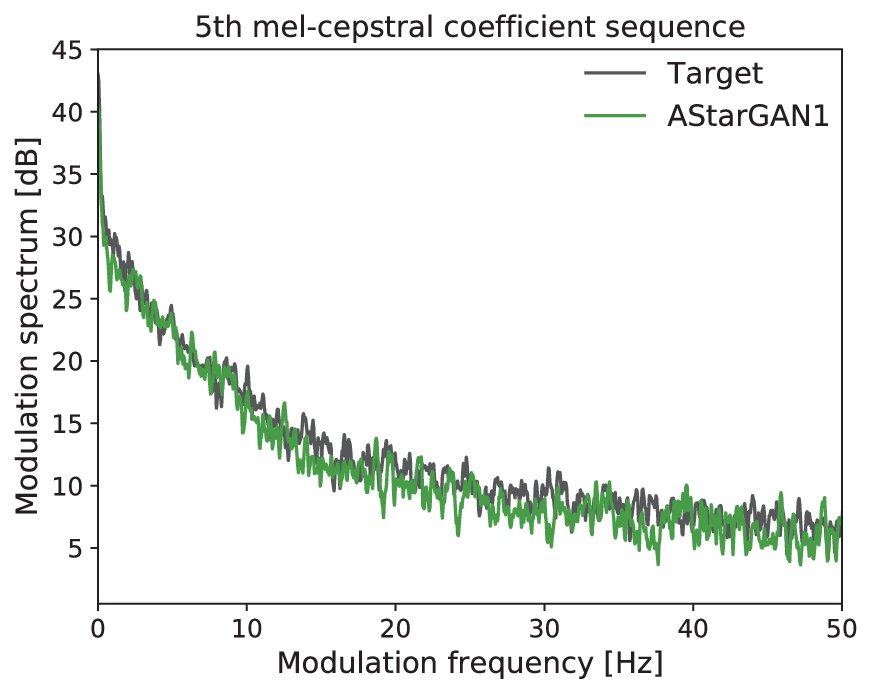}}
  \medskip
  \centerline{\includegraphics[width=.98\linewidth]{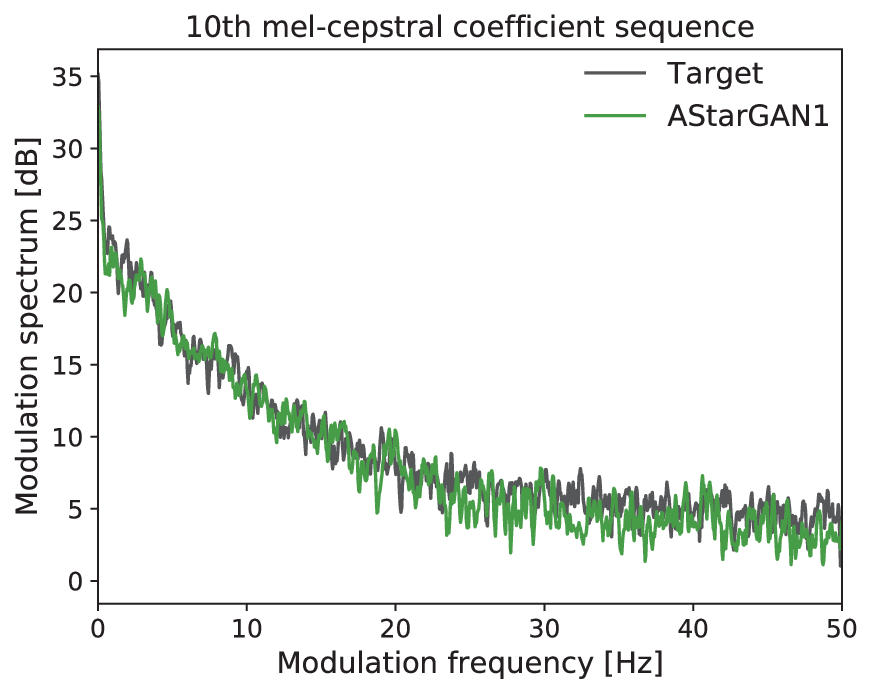}}
  \medskip
  \centerline{\includegraphics[width=.98\linewidth]{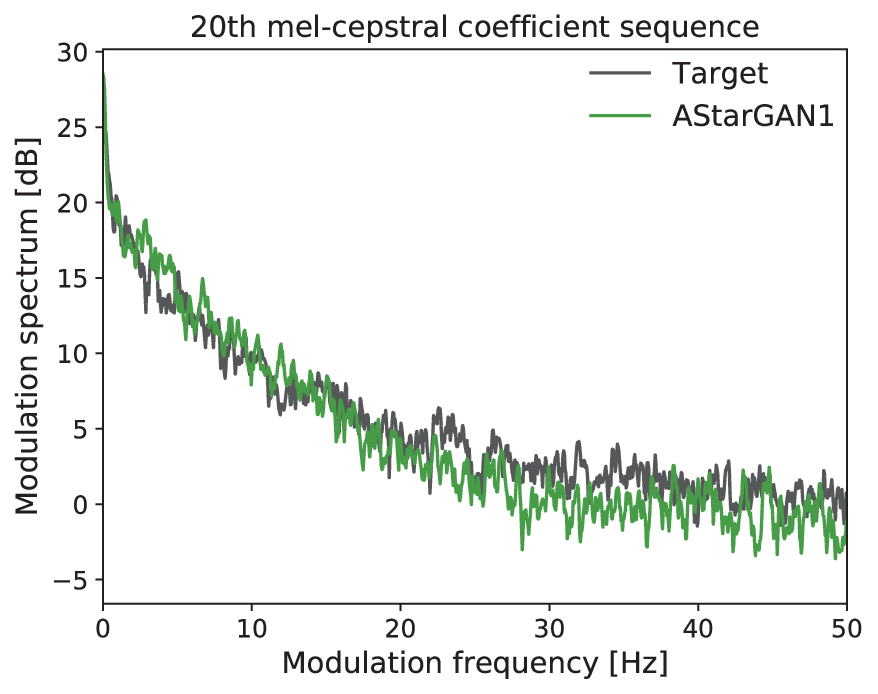}}
\end{minipage}
\caption{Average modulation spectra of the 5th, 10th and 20th dimensions of the converted MCC sequences obtained with the baseline methods and
the StarGAN-based methods.}
\label{fig:modspec_all}
\end{figure*}

\subsection{Subjective Listening Tests}

\begin{figure*}[t!]
\centering
\begin{minipage}[t]{.49\linewidth}
\centering
\centerline{\includegraphics[width=.98\linewidth]{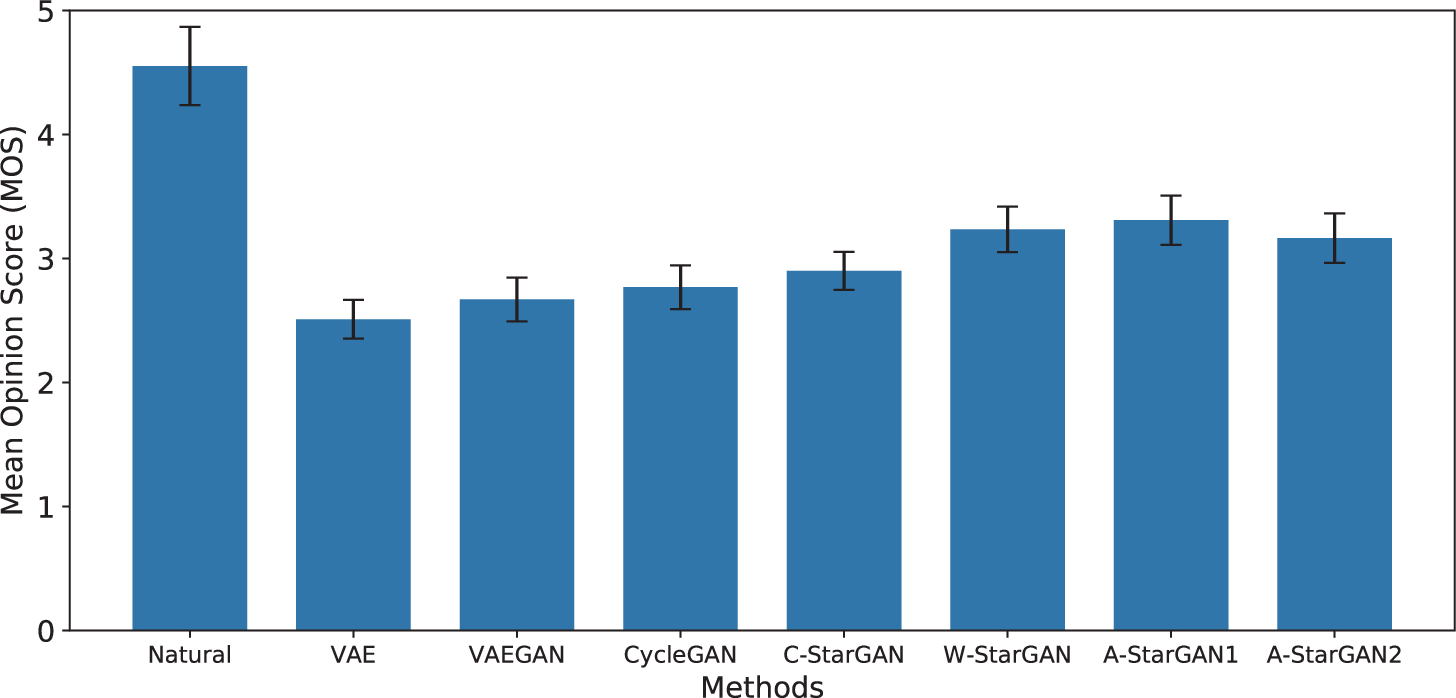}}
  \vspace{-1ex}
  \caption{Results of MOS test for speech quality}
  \label{fig:MOS_qlt}
\end{minipage}
\centering
\begin{minipage}[t]{.49\linewidth}
\centering
\centerline{\includegraphics[width=.98\linewidth]{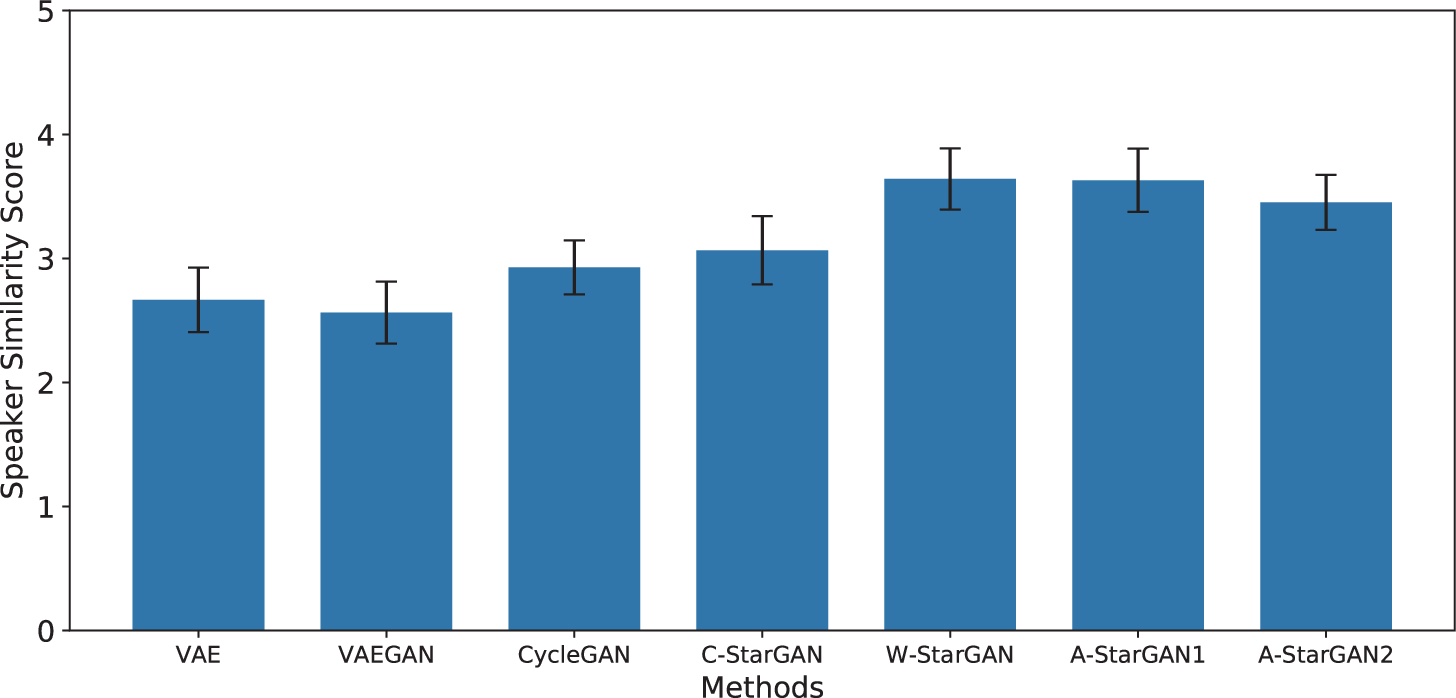}}
  \vspace{-1ex}
  \caption{Results of subjective speaker similarity test}
  \label{fig:MOS_sim}
\end{minipage}
\end{figure*}

\begin{figure}
\centering
\begin{minipage}[t]{.6\linewidth}
\centering
\centerline{\includegraphics[width=.98\linewidth]{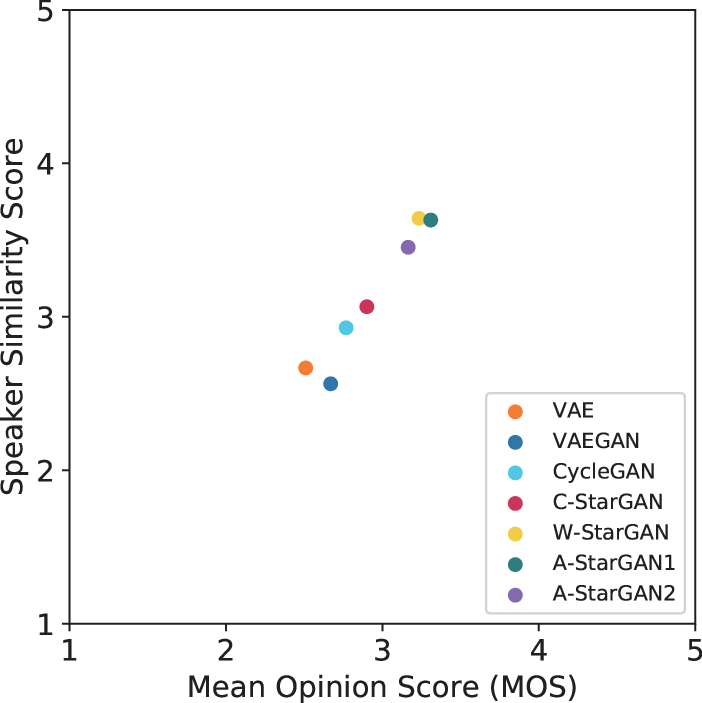}}
\caption{Scatter plot of \reffigs{MOS_qlt}{MOS_sim}}
\label{fig:scatter}
\end{minipage}
\end{figure}

We conducted 
subjective listening tests to
compare the speech quality and speaker similarity
of the converted speech samples
obtained with the proposed and baseline methods.
For these tests, we used the CMU ARCTIC dataset.
Twenty-four listeners (including 21 native Japanese speakers) 
participated in both tests.
The tests were conducted online, where each participant was asked 
to use a headphone in a quiet environment.

With the speech quality test, we evaluated the mean opinion score (MOS) for each speech sample.
In this test, we included the speech samples 
synthesized in the same way as the proposed and baseline methods (namely with the WORLD synthesizer) using
the acoustic features directly extracted from real speech samples. 
Hence, the scores of these samples are expected to show the upper limit of the performance. 
Speech samples were presented in random orders to eliminate bias as regards the order of the stimuli. 
Each listener was  
asked to evaluate the naturalness
by selecting 5: Excellent, 4: Good, 3: Fair, 2: Poor, or 1: Bad for each utterance.
The obtained scores
with 95\% confidence intervals are shown in \reffig{MOS_qlt}.
As the results show, 
A-StarGAN1 performed slightly better than 
W-StarGAN and A-StarGAN2 (although the differences were not significant)
and significantly better than C-StarGAN and the VAE and VAEGAN methods.
However, it also became clear that the speech quality obtained with all the 
methods tested here
was still perceptually distinguishable from real speech samples.

With the speaker similarity test, 
the subjective score for each sample was rated
on a five-point scale, as with the speech quality test. 
Each listener was given a converted speech sample and 
a real speech sample of the corresponding target speaker 
and asked to evaluate how likely they were to have been produced by the same speaker by selecting 
5: Definitely, 4: Likely, 3: Fairly likely, 2: Not very likely, or 1: Unlikely. 
The obtained scores
with 95\% confidence intervals
are shown in \reffig{MOS_sim}.
A scatter plot of the speech quality and speaker similarity scores 
obtained with the tested methods is shown in \reffig{scatter}.
As can be seen from the results,
the W-StarGAN and A-StarGAN formulations
performed comparably to each other and
showed significantly
better conversion ability 
than the remaining four methods.

Voice conversion examples 
are provided at \cite{Kameoka2020url_astargan-vc}.

\section{Conclusion}

In this paper, we proposed a method that allows nonparallel multi-domain VC 
based on StarGAN. We described three formulations of StarGAN and
compared them and several baseline methods 
in a nonparallel speaker identity conversion task.
Through objective evaluations, we confirmed that
our method was able to convert speaker identities reasonably well using
only several minutes of training examples.
Interested readers are referred to \cite{Kaneko2019short_cycleganvc2,Kaneko2019short_starganvc2}
for our investigations of other
network architecture designs and improved techniques for CycleGAN-VC and StarGAN-VC.

One limitation of the proposed method is that 
it can only convert input speech to the voice of a speaker 
seen in a given training set.
This is due to the fact that  
one-hot encoding (or a simple embedding) used for speaker conditioning is 
nongeneralizable to unseen speakers.
An interesting topic for future work includes 
developing a zero-shot VC system that can convert input speech to 
the voice of an unseen speaker by looking at only a few of his/her utterances.
As in the recent work \cite{Qian2019},
one possible way to achieve this involves using 
a speaker embedding 
pretrained based on a metric learning framework
for speaker conditioning.


\ifCLASSOPTIONcaptionsoff
  \newpage
\fi



\bibliographystyle{IEEEtran}
\bibliography{Kameoka2018arXiv11}

\end{document}